\documentclass[12pt]{article}
\usepackage{amssymb, amsthm, amsmath, bm, bbm, mathrsfs, graphicx, color}
\usepackage{mdframed}
\usepackage{setspace}
\usepackage{latexsym}
\usepackage{multicol}
\usepackage[round]{natbib}
\usepackage{pdflscape}
\usepackage{booktabs}
\usepackage{microtype}
\usepackage[colorlinks=true,linkcolor=blue,citecolor=blue,urlcolor=blue]{hyperref}

\newcommand{\E}{\mathbb{E}}

\newcommand{\Var}{\mathrm{Var}}
\DeclareMathOperator{\IB}{IB}

\newtheorem{theorem}{Theorem}

\newtheorem{remark}{Remark}

\newtheorem{proposition}{Proposition}

\begin{document}
\onehalfspacing

\begin{center}
{\bf \sc \Large L-Estimation of Population Quantiles Using Ranked Set Sampling}
\vspace{0.45cm}

Mohammad Jafari Jozani$^{a,}$\footnote{Corresponding author: {m\_jafari\_jozani@umanitoba.ca}}, \quad 
Ehsan Zamanzade$^{b}$\quad 
and Reza Modarres$^{c}$

\vspace{0.25cm}

\vspace{0.2cm}
{\it $^{a}$Department of Statistics, University of Manitoba, Winnipeg, MB, Canada R3T 2N2}\\[3pt]
{\it $^{b}$ Department of Statistics, Faculty of Mathematics and Statistics, University of Isfahan,  IRAN}\\[3pt]
{\it $^{c}$Department of Statistics, The George Washington University, Washington, DC, USA}
\end{center}

\begin{abstract}
Quantile estimation is central when interest lies in thresholds or tail behavior rather than the mean. When exact measurement is costly but units can be ranked cheaply, ranked set sampling (RSS) provides an attractive alternative to simple random sampling (SRS). We develop two families of RSS-based L-estimators for population quantiles that extend Stigler-type and Harrell--Davis estimators to the RSS framework. The first applies weighted-order-statistic estimation directly to the pooled ordered RSS sample and serves primarily as an exact conceptual benchmark, since its computational burden increases rapidly with the set size. The second exploits a decomposition induced by the RSS design that constructs $k$ pooled transformed-scale component estimators indexed by rank stratum and leads to a computationally scalable procedure. We derive large-sample results for these component estimators under regularity conditions; these results provide a principled first-order motivation for the combined estimators employed in practice. Simulation results across several distributions, quantile levels, and ranking qualities show consistent efficiency gains over empirical quantile estimators under both SRS and RSS, with the RSS Harrell--Davis version performing especially well for moderate and upper quantiles. Beyond the simulation study, we demonstrate the practical relevance of the proposed estimators through an application to NHANES transient elastography data, highlighting their usefulness for estimating clinically meaningful quantiles in a biomedical setting.
\end{abstract}
\noindent{\bf Keywords:} Harrell--Davis estimator; L-statistics; nonparametric estimation; population quantiles; ranked set sampling.

\section{Introduction}
\label{sec1}

Quantile estimation is a central problem in statistics whenever scientific interest lies in thresholds, extremes, tail risk, or distributional heterogeneity rather than central tendency. It arises in many important areas, including environmental monitoring, clinical medicine, actuarial risk measurement, reliability engineering, and the analysis of income and wealth distributions. In such settings, the goal is often not to estimate a mean, but rather to determine whether an upper or lower tail of the distribution exceeds a policy-relevant threshold, or to summarize distributional heterogeneity through quantile-based indices. In environmental monitoring, for example, upper quantiles of contaminant concentrations are used to assess compliance, identify high-exposure regions, and guide regulatory intervention.

A particularly compelling example is mercury monitoring in fish tissue. Methylmercury accumulation in fish is a major public health concern in North America, and fish-consumption advisories are often based on whether upper quantiles of the mercury concentration distribution exceed safety thresholds \citep{hc2010mercuryfish}. Exact mercury determination requires costly laboratory tissue assays, whereas fish can often be ranked easily and reliably by body length, which is strongly associated with mercury accumulation \citep{evans2005elevated}. In such problems, more efficient quantile estimation can translate directly into more precise regulatory decisions under a fixed laboratory budget.

Similar considerations arise in many other applications. In reliability and survival studies, lower and upper lifetime quantiles summarize early-failure risk and long-run durability \citep{modarres2002estimation}. In clinical and public-health settings, quantiles define reference ranges, screening cutoffs, and risk strata for biomarkers and physiological measurements. In household surveys, upper income quantiles and inequality indices such as the Gini coefficient are central to taxation, poverty measurement, and welfare policy \citep{giorgi1999income}, yet income is often expensive to measure accurately, whereas households can frequently be ranked reasonably well using observable proxies such as dwelling characteristics or consumer durable ownership. In actuarial science and financial risk management, quantities such as Value-at-Risk are fundamentally quantile-based \citep{wang1995insurance,brazauskas2004empirical}. In all of these examples, exact measurement of the variable of interest is expensive, while approximate ranking is often available at negligible cost. A field observer may be able to rank plants by biomass without harvesting them all, a clinician may be able to rank patients by apparent severity before ordering costly laboratory assays, and an environmental scientist may be able to rank sites by likely contamination before conducting detailed chemical analyses.

This imbalance between costly measurement and inexpensive ranking is precisely the setting for which ranked set sampling was developed. Ranked set sampling (RSS), introduced by \citet{mcintyre1952method}, has emerged as a powerful and cost-efficient alternative to simple random sampling (SRS), particularly when ranking is inexpensive relative to actual measurement. Such designs have proven especially useful in environmental, agricultural, and ecological studies, as well as in medical and epidemiological investigations, where partial ordering can often be obtained at minimal cost \citep{kvam2003ranked,samawi2001estimation,hatefi2017improved}. Over the past decades, RSS has been shown to yield substantial efficiency gains for a wide range of inferential targets, including means, variances, distribution functions, and general parametric and nonparametric functionals; see \citet{chen2004ranked} for a comprehensive overview.

Despite this extensive literature, the development of quantile estimation methods under RSS remains comparatively limited. Existing approaches have focused predominantly on empirical quantile estimators or procedures based on order statistics derived from ranked samples \citep{chen2001optimal,mahdizadeh2012quantile,nourmohammadi2014confidence}. While these methods inherit the simplicity of empirical quantiles, they do not fully exploit the structural richness of RSS, in which observations are systematically distributed across pre-determined rank strata rather than arising as an unordered random sample.

Recent evidence from related rank-based sampling frameworks further suggests that ranking information can be exploited more effectively than is currently done in RSS quantile estimation. In particular, work on conditional quantiles and robust regression under ranking designs shows that even imperfect or partial ranking can lead to substantial gains in efficiency and inferential stability \citep{jozani2018quantile,loewen2026robust}. These findings reinforce a broader methodological principle: when inexpensive ranking information is available, it can and should be used more fully to improve quantile-oriented inference.

However, an important limitation of existing RSS-based quantile methods is that they do not incorporate a well-established idea from the classical SRS framework. Under SRS, a fundamental improvement over naive empirical quantiles is obtained by replacing a single order statistic with a carefully constructed weighted combination of order statistics. This idea forms the basis of L-estimation, a class of estimators that are linear combinations of order statistics and are known for their robustness and efficiency properties. Classical L-estimators, such as the Stigler-type estimator and the Harrell--Davis quantile estimator, have been shown to improve estimation stability and reduce mean squared error, especially in small samples and in tail regions where empirical quantiles tend to be highly variable or unstable \citep{stigler1974linear,harrell1982new,serfling1980approximation}. The key insight is that quantiles need not be estimated using a single order statistic; instead, information can be pooled across multiple order statistics in a principled way to reduce variability without introducing substantial bias.

Given the intrinsic structure of RSS, where observations are deliberately drawn from distinct rank strata of the underlying population, it is natural to ask whether the principles of L-estimation can be integrated into the RSS framework. Unlike SRS, RSS already provides partial ordering information by design, suggesting that the effective information content of order statistics may be richer and more structured. This raises the possibility that appropriately designed weighted combinations of RSS-based order statistics could yield estimators that are more efficient than the conventional empirical RSS quantile. From this perspective, RSS offers not only a more informative sampling design, but also a more structured basis on which L-type quantile estimators can be built. The interaction between ranking design and weighted order-statistic aggregation has not been fully explored in the literature, despite its potential to substantially improve finite-sample performance.

The present work is motivated by this gap. We investigate whether L-estimation principles can be adapted to RSS in a way that leverages its stratified ranking structure. Our goal is to develop quantile estimators that combine the design efficiency of RSS with the statistical stability of weighted order-statistic methods. This integration has the potential to improve upon existing RSS-based quantile estimators, particularly in small samples and in distributional tails where classical methods are most fragile.

The rest of the paper is organized as follows. Section~\ref{sec2} reviews ranked set sampling and notation. In Section~\ref{sec3}, we review L-estimation of population quantiles using SRS data. In Section~\ref{sec4}, we develop new L-estimators for population quantiles under RSS. We first study L-estimation using the pooled ordered RSS sample. However, because we work with the non-identically distributed order structure induced by RSS, calculating such estimators becomes computationally burdensome as the set size grows. We therefore follow these results with a more practically important construction that exploits a quantile decomposition identity induced by the RSS design and builds $k$ pooled transformed-scale component estimators indexed by rank stratum. This yields computationally scalable RSS analogues of both the Stigler-type and Harrell--Davis estimators. Under regularity conditions, we establish consistency and first-order rate control for these component estimators, and we show how these results provide a principled first-order justification for the combined RSS estimators used in practice; a full asymptotic treatment of the ordering and interpolation step is left for future work. The effect of imperfect ranking is also studied in this section. In Section~\ref{sec5}, we investigate the finite-sample performance of the proposed estimators through an extensive Monte Carlo study under several distributions, quantile levels, and ranking qualities. The simulation results show that the proposed RSS L-estimators can improve upon empirical quantile estimation under both SRS and RSS, with the RSS Harrell--Davis version performing particularly well for moderate and upper quantiles in many settings. Section~\ref{sec6} presents a real-data illustration based on NHANES transient elastography data, and Section~\ref{sec7} concludes with a discussion and future directions.

%
\section{Ranked Set Sampling}
\label{sec2}

To obtain a ranked set sample of size $n = mk$ from a target population, the following two-stage procedure is repeated for $m$ cycles. In each cycle, $k$ independent units are drawn from the population, constituting a {set} of size $k$. These units are ranked from smallest to largest according to the variable of interest using any ranking mechanism (e.g., visual inspection, expert knowledge, a concomitant measurement) that does not require actual quantification of the variable. From the ranked set in cycle $j$, the unit assigned rank $r$ is selected for actual measurement; equivalently, cycling through ranks $r=1,\ldots,k$ across the $k$ independent sets within each cycle ensures that one unit from each rank is measured. Repeating for $m$ cycles yields a ranked set sample of size $n=mk$, which we denote
\[
\mathbf{Y}_{RSS} = \{Y_{[r]j};\; r=1,\ldots,k,\; j=1,\ldots,m\},
\]
where $Y_{[r]j}$ is the measured value of the unit with judgment rank $r$ in cycle $j$. The square brackets in $[\cdot]$ distinguish judgment ranks, which may be subject to error, from true order statistics, which would be denoted with round brackets $(\cdot)$.

Under perfect ranking, the judgment ranks coincide with the true ranks, and $Y_{(r)j}$ has the marginal distribution of the $r$th order statistic in a sample of size $k$ from the underlying population with CDF $F$ and PDF $f$. The PDF and CDF of this stratum-$r$ distribution are
\begin{equation}\label{eq:OS-density}
f_{(r)}(y) = \frac{1}{B(r,\, k-r+1)}\, F^{r-1}(y)\,(1-F(y))^{k-r}\, f(y), \qquad
F_{(r)}(y) = \IB_{r,\,k-r+1}(F(y)),
\end{equation}
where $B(a,b)=\int_0^1 t^{a-1}(1-t)^{b-1}\,dt$ is the beta function and $\IB_{a,b}(t)=B(a,b)^{-1}\int_0^t u^{a-1}(1-u)^{b-1}\,du$ is the regularized incomplete beta function. The RSS observations are independent across all $(r,j)$ pairs because each set of $k$ units is drawn independently of all other sets. They are, however, {not} identically distributed. This independent, non-identically distributed structure is fundamental to the efficiency advantage of RSS.

The empirical distribution function of the RSS sample is
\[
\widehat{F}_n(t) = \frac{1}{mk}\sum_{r=1}^k\sum_{j=1}^m \mathbf{1}(Y_{[r]j} \leq t),
\]
which is the average over all $n=mk$ indicators. A ranking process is called {consistent} \citep{presnell1999u} if
\begin{equation}\label{eq:consistent}
F(y) = \frac{1}{k}\sum_{r=1}^k F_{[r]}(y), \qquad y \in \mathbb{R},
\end{equation}
where $F_{[r]}$ is the marginal CDF of a judgment-rank-$r$ observation. This identity is satisfied by most imperfect ranking models in the literature and ensures that $\widehat{F}_n(t)$ is an unbiased estimator of $F(t)$ with $\widehat{F}_n(t)\xrightarrow{a.s.}F(t)$ as $m\to\infty$. To limit ranking errors, it is standard practice to keep the set size $k$ small (typically $k\leq 6$) and increase the total sample size through the number of cycles $m$.

The efficiency advantage of RSS is most transparently illustrated for the population mean. The RSS mean estimator $\widehat{\mu}_{RSS} = n^{-1}\sum_{r,j}Y_{[r]j}$ is unbiased for $\mu=\E[Y]$ and satisfies
\begin{equation}\label{eq:mean-efficiency}
\Var(\widehat{\mu}_{RSS}) = \frac{\sigma^2}{n} - \frac{1}{nk}\sum_{r=1}^k\left(\mu_{r:k} - \mu\right)^2 \;\leq\; \frac{\sigma^2}{n} = \Var(\widehat{\mu}_{SRS}),
\end{equation}
where $\mu_{r:k}=\E[Y_{(r)}]$ is the mean of the $r$th order statistic in a sample of size $k$ from the population \citep{chen2004ranked}.  By deliberately sampling from different strata of the population induced by ranks, one replaces the sampling variability of an SRS with a residual within-stratum variability that is necessarily smaller. 

\section{L-Estimators of Quantiles Under SRS}
\label{sec3}

Let $\mathbf{Y}_{SRS}=\{Y_1,\ldots,Y_n\}$ be a simple random sample from a population with continuous CDF $F$ and PDF $f$. The $p$th population quantile is $\zeta_p=F^{-1}(p)=\inf\{y:F(y)\geq p\}$ for $p\in(0,1)$. We review three estimators of $\zeta_p$ based on $\mathbf{Y}_{SRS}$, as these serve both as benchmarks and as building blocks for our RSS-based estimators in Section~\ref{sec4}.

The simplest estimator is the {empirical quantile} $\widehat{\zeta}_{EM,SRS}$, which uses the central order statistic:
\begin{equation}\label{eq:EM-SRS}
\widehat{\zeta}_{EM,SRS} = \begin{cases} Y_{(np)} & \text{if $np$ is an integer,}\\ Y_{(\lfloor np\rfloor+1)} & \text{otherwise,}\end{cases}
\end{equation}
where $Y_{(1)}\leq\cdots\leq Y_{(n)}$ are the order statistics. The intuition is that the $p$th quantile is the value below which a fraction $p$ of the population lies, and the order statistic $Y_{(\lfloor np\rfloor+1)}$ is the empirical counterpart of this value. Under mild regularity conditions, $\sqrt{n}f(\zeta_p)(\widehat{\zeta}_{EM,SRS}-\zeta_p)/\sqrt{p(1-p)}\overset{\mathcal{L}}{\to}N(0,1)$ \citep{arnold2008first}, giving asymptotic variance $\sigma^2_{EM}=p(1-p)/f^2(\zeta_p)$. An almost-sure convergence rate is given by \cite{serfling1980approximation}:
\begin{equation}\label{eq:EM-bound}
|\widehat{\zeta}_{EM,SRS}-\zeta_p|\leq\frac{2\sqrt{\log n}}{f(\zeta_p)\sqrt{n}} \quad \text{almost surely.}
\end{equation}
The empirical quantile estimator uses only a single order statistic and discards all information in the remaining $n-1$ observations (beyond the ordering). The L-estimation framework improves on this by using weighted combinations of all $n$ order statistics. The central idea is to represent $\zeta_p$ as a functional of $F$ and estimate it by plugging in the empirical distribution function $F_n$. Since $\E[Y_{(i)}]=\int y\,\mathcal{J}_{i,n-i+1}(F(y))\,dF(y)$, where $\mathcal{J}_{a,b}(t)=t^{a-1}(1-t)^{b-1}/B(a,b)$ is the Beta$(a,b)$ density, and since $\E[Y_{(\lfloor np\rfloor+1)}]\to\zeta_p$ as $n\to\infty$, the Beta density $\mathcal{J}_{j^*,n-j^*+1}$ with $j^*=\lfloor(n+1)p\rfloor$ concentrates mass near $i/n\approx p$ and provides a natural weighting scheme for the order statistics. The resulting estimator denoted by LF hereafter   is \citep{stigler1974linear}
\begin{equation}\label{eq:LF-SRS}
\widehat{\zeta}_{LF,SRS} = \frac{1}{n}\sum_{i=1}^n W_{n,i,p}\,Y_{(i)},
\qquad W_{n,i,p}=\mathcal{J}_{j^*,n-j^*+1}\!\left(\frac{i}{n}\right).
\end{equation}
The weights $W_{n,i,p}$ form a discrete approximation to the Beta$(j^*,n-j^*+1)$ density evaluated at $i/n$. They are unimodal and centred near $i/n=p$, so order statistics close to the target quantile receive the highest weight, while those far from $p$ receive small but positive weights. Compared with $\widehat{\zeta}_{EM,SRS}$, which concentrates all weight on a single order statistic, the LF estimator spreads the weight across neighbouring order statistics, reducing the variance of the estimator at the cost of introducing a small but controlled bias. The general asymptotic normality of L-estimators is established in \cite{stigler1974linear} and \cite{serfling1980approximation}. Necessary and sufficient conditions  for asymptotic normality of L-estimators are  given in \citet{mason1992necessary} and  it suffices that the score function $\mathcal{J}$ has bounded variation and $\int_0^1|\mathcal{J}(t)|^2\,dt<\infty$, both of which hold for Beta densities.

The Harrell--Davis estimator \citep{harrell1982new}, denoted by HD, extends the L-estimation idea by using all $n$ order statistics with weights given by Beta-distribution probability masses over the intervals $((i-1)/n,i/n]$. In contrast to the LF estimator, which uses a more localized weighting scheme, the Harrell--Davis estimator averages more smoothly across the ordered sample and can yield lower variance in small samples under regular distributional settings.
To target the $p$th quantile, Harrell and Davis set
$
a=(n+1)p,
$ and 
$
b=(n+1)(1-p),
$
so that the Beta$(a,b)$ distribution has mean $p$. Approximating $F^{-1}$ by the empirical quantile function $F_n^{-1}$, which equals $Y_{(i)}$ on the interval $((i-1)/n,i/n]$, gives
\begin{equation}\label{eq:HD-SRS}
\widehat{\zeta}_{HD,SRS}
=
\sum_{i=1}^n W^*_{n,i,p}\,Y_{(i)},
\qquad
W^*_{n,i,p}
=
\IB_{a,b}\!\left(\frac{i}{n}\right)
-
\IB_{a,b}\!\left(\frac{i-1}{n}\right).
\end{equation}
 The weights $W^*_{n,i,p}$  are nonnegative, sum to one, and place most of their mass near $i/n\approx p$.
The key distinction from LF is that HD averages the empirical quantile function against the Beta$(a,b)$ density, rather than weighting order statistics by pointwise density evaluations. This produces a smoother weighting scheme, though because all order statistics contribute, the estimator may be less robust to extreme contamination. 
Under regularity conditions, including absolute continuity of $F^{-1}$ and $\E[Y^2]<\infty$, the estimator is asymptotically normal after centering and scaling \citep{serfling1980approximation}. Moreover, its bias is typically of order $O(1/n)$, which is smaller than its $O(n^{-1/2})$ standard deviation, making the HD estimator attractive in moderate and small samples.

 An important observation for what follows is that both LF and HD are inherently designed for SRS data from $F$ and  extending them to RSS requires accounting for the fact that each stratum sub-sample is drawn from a different distribution $F_{(r)}$, not from $F$ itself.

\section{RSS-Based L-Estimators of Population Quantiles}
\label{sec4}

Let $\mathbf{Y}_{RSS}=\{Y_{[r]j},\;r=1,\ldots,k,\;j=1,\ldots,m\}$ be an RSS sample of size $n=mk$ from a population with CDF $F$ and PDF $f$. We present two approaches to constructing L-estimators of $\zeta_p$ from this sample.

\subsection{L-estimation on the pooled ordered RSS sample}
A natural benchmark strategy is to ignore the rank-stratum structure after sampling, pool the full RSS sample of size $n=mk$, and sort the observed values into the ordered RSS sample (ORSS)
\[
Y^*_{(1)} \le \cdots \le Y^*_{(n)}.
\]
One may then ask whether the L-estimation ideas from the SRS setting can be transferred directly to this pooled ordered sample. The main difficulty is that the pooled RSS observations are independent but not identically distributed. The $m$ observations from judgment rank $r$ have common distribution $F_{[r]}$, and hence the order statistics of the pooled sample do not follow the same laws as the order statistics from an i.i.d.\ sample of size $n$ from $F$. Consequently, the usual Beta weights from the SRS case no longer apply without modification.

Let
\[
S=\{(r,s): r=1,\ldots,k,\ s=1,\ldots,m\}
\]
be the index set of the $n=mk$ RSS observations, where $Y_{[r]s}$ denotes the measured unit from judgment rank $r$ in cycle $s$. The CDF of the $i$th ORSS order statistic is \citep{arnold2008first}
\begin{equation}\label{eq:Gi}
G_i(y)
=
\sum_{j=i}^n \;\sum_{\substack{I\subseteq S,\;|I|=j}}
\Biggl[\prod_{(r,s)\in I} F_{[r]}(y)\Biggr]
\Biggl[\prod_{(r,s)\in S\setminus I} \{1-F_{[r]}(y)\}\Biggr].
\end{equation}
 Differentiating with respect to $y$ gives the corresponding density
\begin{equation}\label{eq:gi}
g_i(y)
=
\sum_{s=1}^k \sum_{\ell=1}^m
f_{[s]}(y)
\!\!\sum_{\substack{J\subseteq S\setminus\{(s,\ell)\},\\|J|=i-1}}
\Biggl[
\prod_{(r,q)\in J} F_{[r]}(y)
\prod_{(r,q)\notin J\cup\{(s,\ell)\}} \{1-F_{[r]}(y)\}
\Biggr].
\end{equation}
%
Fix $p\in(0,1)$ and let
\[
r_p=
\begin{cases}
np, & \text{if } np \text{ is an integer}, \\[4pt]
\lfloor np\rfloor+1, & \text{if } np \text{ is not an integer}.
\end{cases}
\]
Then, whenever the expectation exists,
\begin{equation}\label{eq:ORSS-functional}
\E\!\left[Y^*_{(r_p)}\right]
=
\int_{-\infty}^{\infty} y\,g_{r_p}(y)\,dy
=
\int_0^1  F^{-1}(u)\,\frac{g_{r_p}(F^{-1}(u))}{f(F^{-1}(u))}\, du
=
\int_0^1 F^{-1}(u)\,\psi_{r_p}(u)\,du,
\end{equation}
where  
\[
\psi_{r_p}(u):=\frac{g_{r_p}(F^{-1}(u))}{f(F^{-1}(u))},
\]
plays the role of the ORSS analogue of the Beta score function in the i.i.d.\ SRS case. Note that  $\psi_{r_p}(u)$
depends only on $u$, $k$, $m$, and the pooled order index $r_p$, and not on the unknown population distribution $F$. To see this, note that
\begin{equation}\label{eq:psi-i-explicit}
\psi_i(u)
=
\sum_{s=1}^{k} \sum_{\ell=1}^{m}
\mathcal{J}_{s,k-s+1}(u)
\!\!\sum_{\substack{J \subseteq S \setminus \{(s,\ell)\} \\ |J| = i - 1}}
\left[
\prod_{(r,q) \in J} \IB_{r,k-r+1}(u)
\prod_{(r,q) \notin J \cup \{(s,\ell)\}} \IB_{k-r+1,r}(1-u)
\right].
\end{equation}
Here the last product uses the identity $1-\IB_{r,k-r+1}(u)=\IB_{k-r+1,r}(1-u)$, which holds
for all $u\in(0,1)$ and follows from the standard symmetry relation
$\IB_{a,b}(t)+\IB_{b,a}(1-t)=1$ of the regularized incomplete beta function.

The identity in \eqref{eq:ORSS-functional} motivates a sample-based estimator through the usual plug-in principle by replacing the population quantile function $F^{-1}$ by the empirical quantile function of the pooled ordered RSS sample,
\[
F_n^{-1}(u)=Y^*_{(i)}, \qquad \frac{i-1}{n}<u\le \frac{i}{n}, \quad i=1,\ldots,n.
\]
Substituting $F_n^{-1}$ into \eqref{eq:ORSS-functional} gives
\[
\int_0^1 F_n^{-1}(u)\,\psi_{r_p}(u)\,du
=
\sum_{i=1}^n
\left(
\int_{(i-1)/n}^{i/n}\psi_{r_p}(u)\,du
\right)Y^*_{(i)}.
\]
This leads to the exact plug-in estimator
\begin{equation}\label{eq:ORSS-HD}
\widehat{\zeta}_{HD,ORSS}
=
\sum_{i=1}^n \widetilde{W}^*_{n,i,p}\,Y^*_{(i)},
\qquad
\widetilde{W}^*_{n,i,p}
=
\int_{(i-1)/n}^{i/n}\psi_{r_p}(u)\,du.
\end{equation}
A simpler Riemann-sum approximation is
\begin{equation}\label{eq:ORSS-LF}
\widehat{\zeta}_{LF,ORSS}
=
\frac{1}{n}\sum_{i=1}^n \psi_{r_p}\!\left(\frac{i}{n}\right)Y^*_{(i)}.
\end{equation}
Note that both $\psi_{r_p}(i/n)$ and $\widetilde{W}^*_{n,i,p}$ are functions of $(k,m,p)$ only and are therefore pre-tabulable for any fixed design. In practice the polynomial algorithm of Remark~\ref{rem:poly} below is used to evaluate these quantities numerically; the computational cost of that algorithm depends on how the evaluation grid is chosen, as discussed in Section~\ref{sec5}.

The distinction between \eqref{eq:ORSS-HD} and \eqref{eq:ORSS-LF} is worth emphasizing. The estimator in \eqref{eq:ORSS-HD} is the cleaner construction because it is the exact plug-in version of the L-functional in \eqref{eq:ORSS-functional}. Once $F^{-1}$ is replaced by $F_n^{-1}$, no further approximation is introduced. By contrast, \eqref{eq:ORSS-LF} makes one additional discretization step by approximating the interval integrals in the exact weights with a Riemann sum. Thus \eqref{eq:ORSS-HD} is mathematically more faithful to the target functional, whereas \eqref{eq:ORSS-LF} is a simpler but approximate L-form analogue.

The functional in \eqref{eq:ORSS-functional} equals $\E[Y^*_{(r_p)}]$, not the target quantile $\zeta_p$ exactly. Nevertheless, one can show that under perfect ranking  $Y^*_{(r_p)}$  is  a consistent estimator  of $\zeta_p$. Also,  $\E[Y^*_{(r_p)}]$ is  asymptotically centered at $\zeta_p$, but this sharper bias statement is not needed for the sequel. These justify viewing the ORSS constructions as quantile-targeting benchmarks.

\begin{proposition}
\label{prop:ORSS-target}
Assume perfect ranking. Let $k$ be fixed and $m\to\infty$, so that $n=mk\to\infty$.
Fix $p\in(0,1)$, let $\zeta_p=F^{-1}(p)$, and define
$
r_p=\lfloor np\rfloor+1,
$
with the convention $r_p=np$ when $np$ is an integer. 
If $F$ is continuous and strictly increasing at $\zeta_p$, then
$
Y^*_{(r_p)} \xrightarrow{P} \zeta_p.
$
\end{proposition}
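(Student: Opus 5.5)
The plan is the classical sandwich argument for sample quantiles, with the RSS empirical distribution function $\widehat F_n$ in place of the SRS one. The key observation is that $Y^*_{(r_p)}$ is the $r_p$th smallest of the pooled observations. So for any $t$,
\[
\{Y^*_{(r_p)}\le t\}=\{n\widehat F_n(t)\ge r_p\}.
\]
Fix $\varepsilon>0$. Since $F$ is continuous and strictly increasing at $\zeta_p$, we have
\[
F(\zeta_p-\varepsilon)<p<F(\zeta_p+\varepsilon).
\]
Set $\delta=\min\{p-F(\zeta_p-\varepsilon),\,F(\zeta_p+\varepsilon)-p\}>0$.

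We then bound the two tail events separately:
\[
P\bigl(Y^*_{(r_p)}>\zeta_p+\varepsilon\bigr)=P\bigl(\widehat F_n(\zeta_p+\varepsilon)<r_p/n\bigr),
\qquad
P\bigl(Y^*_{(r_p)}\le\zeta_p-\varepsilon\bigr)=P\bigl(\widehat F_n(\zeta_p-\varepsilon)\ge r_p/n\bigr).
\]
Since $np\le r_p\le np+1$, we have $r_p/n\to p$. So for $m$ large, $|r_p/n-p|<\delta/2$.

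On the first event this forces $\widehat F_n(\zeta_p+\varepsilon)-F(\zeta_p+\varepsilon)<-\delta/2$. On the second it forces $\widehat F_n(\zeta_p-\varepsilon)-F(\zeta_p-\varepsilon)\ge\delta/2$. Both probabilities therefore tend to zero once pointwise consistency of $\widehat F_n$ is established. Adding them gives $P(|Y^*_{(r_p)}-\zeta_p|>\varepsilon)\to0$.

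The only step needing care is pointwise consistency of $\widehat F_n(t)$ under the independent but non-identically distributed structure. The paper states it after \eqref{eq:consistent}, but I would verify it directly. Write
\[
\widehat F_n(t)=\frac1k\sum_{r=1}^k\widehat F_{(r)}(t),
\qquad
\widehat F_{(r)}(t)=\frac1m\sum_{j=1}^m\mathbf 1(Y_{(r)j}\le t).
\]
For each fixed $r$, $\widehat F_{(r)}(t)$ is a mean of $m$ i.i.d.\ Bernoulli variables with mean $F_{(r)}(t)=\IB_{r,k-r+1}(F(t))$. Unbiasedness follows from the identity $\frac1k\sum_r\IB_{r,k-r+1}(u)=u$, which is \eqref{eq:consistent} under perfect ranking. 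Independence across strata gives
\[
\Var(\widehat F_n(t))=\frac{1}{k^2m}\sum_{r=1}^k F_{(r)}(t)\bigl(1-F_{(r)}(t)\bigr)\le\frac{1}{4km}\to0.
\]
Chebyshev's inequality then yields $\widehat F_n(t)\xrightarrow{P}F(t)$ with $k$ fixed. Applying this at the two points $t=\zeta_p\pm\varepsilon$ completes the argument.

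Apart from this step, the only subtlety is the convention $r_p=np$ for integer $np$. It is harmless because only $r_p/n\to p$ is used.
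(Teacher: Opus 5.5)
Your proof is correct and follows the paper's argument: pointwise consistency of the pooled RSS empirical CDF (unbiasedness plus the bound $\mathrm{Var}\{\widehat F_n(y)\}\le 1/(4n)$ and Chebyshev), then the inversion step at $\zeta_p\pm\varepsilon$ using $r_p/n\to p$. The only difference is that you spell out what the paper leaves implicit, namely the event identity $\{Y^*_{(r_p)}\le t\}=\{n\widehat F_n(t)\ge r_p\}$ and the margin $\delta$ that turns $r_p/n\to p$ into the two vanishing tail probabilities.
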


\begin{proof}
Define $\widehat{F}_n(y):=\frac{1}{n}\sum_{r=1}^k\sum_{s=1}^m \mathbf 1\{Y_{(r)s}\le y\}$. Then  $\E[\widehat{F}_n(y)]=F(y)$, and  since the indicators are independent,
\[
\Var\{\widehat{F}_n(y)\}
=
\frac{1}{n^2}\sum_{r=1}^k\sum_{s=1}^m
F_{(r)}(y)\{1-F_{(r)}(y)\}
\le \frac{1}{4n}.
\]
Hence, for each fixed $y$,
$
\widehat{F}_n(y)\xrightarrow{P}F(y).
$
Now $Y^*_{(r_p)}$ is the generalized inverse of $\widehat F_n$ at level $r_p/n$, and
$
\frac{r_p}{n}\to p.
$
Fix $\varepsilon>0$. Since $F$ is continuous and strictly increasing at $\zeta_p$,
\[
F(\zeta_p-\varepsilon)<p<F(\zeta_p+\varepsilon).
\]
Therefore, because $\widehat F_n(y)\xrightarrow{P}F(y)$ at the two fixed points
$y=\zeta_p-\varepsilon$ and $y=\zeta_p+\varepsilon$, and because $r_p/n\to p$,
\[
\Pr\!\left(\widehat F_n(\zeta_p-\varepsilon)\ge r_p/n\right)\to 0,
\qquad
\Pr\!\left(\widehat F_n(\zeta_p+\varepsilon)< r_p/n\right)\to 0.
\]
Thus
\[
\Pr\!\left(\zeta_p-\varepsilon<Y^*_{(r_p)}\le \zeta_p+\varepsilon\right)\to 1,
\]
which proves
$
Y^*_{(r_p)}\xrightarrow{P}\zeta_p.
$
\end{proof}

\begin{remark}
Under additional integrability assumptions, the above consistency result suggests that
$\E[Y^*_{(r_p)}]$ is asymptotically centered at $\zeta_p$. A sharper expansion such as
\[
\E[Y^*_{(r_p)}]=\zeta_p+O(n^{-1})
\]
would require additional smoothness and a dedicated order-statistic expansion for the
independent but non-identically distributed pooled RSS sample, and is not needed here.
\end{remark}

\subsection{An algorithm for computing $\widehat{\zeta}_{HD,ORSS}$ and $\widehat{\zeta}_{LF,ORSS}$}\label{subsec:poly}

Although constructing  ORSS estimators $\widehat{\zeta}_{HD,ORSS}$ and $\widehat{\zeta}_{LF,ORSS}$ are conceptually appealing, their practical use is limited by a substantial computational burden. Forming the weights in 
\eqref{eq:ORSS-HD} and \eqref{eq:ORSS-LF} requires the CDF $G_i(t)$ and PDF
$g_i(t)$ of each ORSS order statistic, given in equations~\eqref{eq:Gi}--\eqref{eq:gi}. The direct formula~\eqref{eq:Gi}
expresses $G_i(t)$ as a double sum over all subsets $I\subseteq S$ of size $j$,
for $j=i,\ldots,n$. Summing over all $j$ yields $\sum_{j=0}^n\binom{n}{j}=2^n$
subset evaluations per grid point, so the overall complexity of computing the
full CDF table across $n_t$ grid points is $O(2^n\cdot n_t)$. This grows
exponentially in $n$ and is prohibitive even for modest designs. For $(m,k)=(5,5)$
and $n_t=25$, direct evaluation requires more than $838$ million subset products.  The following result shows that the special structure of the RSS design reduces this problem to a polynomial-time computation.

\begin{proposition}
\label{rem:poly}
Under perfect ranking, the $m$ observations in stratum $r$ are i.i.d.\ with marginal CDF
\[
F_{(r)}(u)=\IB_{r,k-r+1}(u)
\]
on the probability scale $u=F(y)$. Consequently, the indicator $\mathbf{1}\{U_{[r]s}\le t\}$ for each observation in stratum $r$ is Bernoulli with success probability
\[
q_r(t)=F_{(r)}(t)=\IB_{r,k-r+1}(t), \qquad t\in(0,1).
\]
Let
\[
C(t)=\sum_{r=1}^k \sum_{s=1}^m \mathbf{1}\{U_{[r]s}\le t\}
\]
denote the total number of observations falling below $t$. Its probability generating function factors across strata:
\[
\E\!\left[z^{C(t)}\right]
=
\prod_{r=1}^k \E\!\left[z^{B_r(t)}\right]
=
\prod_{r=1}^k \Bigl[(1-q_r(t))+q_r(t)z\Bigr]^m,
\]
where
\[
B_r(t)=\sum_{s=1}^m \mathbf{1}\{U_{[r]s}\le t\}\sim \operatorname{Binomial}(m,q_r(t))
\]
is the within-stratum count. The polynomial
\[
\Bigl[(1-q_r(t))+q_r(t)z\Bigr]^m
\]
has explicitly known binomial coefficients:
\[
[z^j]\Bigl[(1-q_r(t))+q_r(t)z\Bigr]^m
=
\binom{m}{j} q_r(t)^j \bigl(1-q_r(t)\bigr)^{m-j},
\qquad j=0,1,\ldots,m,
\]
so no subset enumeration is needed. The full product polynomial
\[
P(z;t)=\prod_{r=1}^k \Bigl[(1-q_r(t))+q_r(t)z\Bigr]^m,
\]
whose $j$th coefficient is exactly $\Pr\{C(t)=j\}$, can be computed by sequentially convolving the $k$ binomial coefficient vectors, each of length $m+1$.
Each convolution step multiplies a polynomial of degree at most $rm$ by one of degree $m$, which costs $O(rm^2)$ operations. Summing over all $k$ strata gives a total cost of
\[
\sum_{r=1}^k O(rm^2)=O(k^2m^2)=O(n^2).
\]
The CDF values
\[
G_i(t)=\Pr\{C(t)\ge i\}, \qquad i=1,\ldots,n,
\]
are then recovered simultaneously by a single reverse cumulative sum of the coefficient vector:
\[
G_i(t)=\sum_{j=i}^n [z^j]P(z;t), \qquad i=1,\ldots,n.
\]
The PDF values $g_i(t)$ are obtained by a central finite-difference approximation with step size $h=10^{-6}$, which is accurate to order $O(h^2)$. Hence the overall complexity becomes $O(n^2 n_t)$, compared with $O(2^n n_t)$ for direct evaluation of \eqref{eq:Gi}. 
\end{proposition}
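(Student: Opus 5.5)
The plan is to verify the chain of claims in Proposition~\ref{rem:poly} in order: distributional facts, then the generating-function identity, then the cost count.

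\textbf{Transfer to the uniform scale.} Put $U_{[r]s}=F(Y_{[r]s})$. Since $F$ is continuous and nondecreasing, the probability integral transform maps the $r$th order statistic of $k$ i.i.d.\ copies of $Y$ to the $r$th order statistic of $k$ i.i.d.\ $\mathrm{Uniform}(0,1)$ variables. Under perfect ranking, the CDF of $U_{(r)s}$ is therefore $\IB_{r,k-r+1}(u)$, exactly as in \eqref{eq:OS-density} with $F(y)$ replaced by $u$. The $m$ cycles are independent replicates, so within stratum $r$ the $U_{[r]s}$ are i.i.d., and $\mathbf 1\{U_{[r]s}\le t\}\sim\mathrm{Bernoulli}(q_r(t))$.

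\textbf{Generating function and CDF recovery.}
\begin{itemize}
\item All $n$ indicators are mutually independent, because every measured unit comes from its own independently drawn set.
\item Hence $B_r(t)\sim\mathrm{Binomial}(m,q_r(t))$, and the $B_r$ are independent across $r$.
\item The PGF of a sum of independent variables is the product of their PGFs, which gives the displayed factorization of $P(z;t)$.
\item Expanding each factor by the binomial theorem gives its coefficients in closed form, and the $j$th coefficient of $P$ equals $\Pr\{C(t)=j\}$ by definition of a PGF.
\end{itemize}
For $G_i$, note that $Y^*_{(i)}\le y$ exactly when at least $i$ observations are $\le y$. On the $u$-scale this reads $G_i(t)=\Pr\{C(t)\ge i\}$, which is a tail sum of the coefficients. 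This step also confirms that the formula agrees with \eqref{eq:Gi}: grouping the subsets $I$ in \eqref{eq:Gi} by how many indices fall in each stratum reproduces the binomial coefficients. Ties can be ignored because $F$ is continuous.

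\textbf{Complexity.}
\begin{itemize}
\item After $r-1$ convolutions the running polynomial has degree $(r-1)m$. Multiplying it naively by a degree-$m$ polynomial costs $O(rm\cdot m)=O(rm^2)$.
\item Summing over $r$ gives $O(m^2k^2)=O(n^2)$.
\item The reverse cumulative sum costs $O(n)$.
\item Evaluating $g_i$ by central differences needs only two more evaluations of $P$, so each grid point still costs $O(n^2)$. Over $n_t$ grid points the total is $O(n^2n_t)$.
\item Direct use of \eqref{eq:Gi} enumerates all $2^n$ subsets at each grid point, giving $O(2^n n_t)$.
\end{itemize}

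\textbf{Main obstacle.} The only point needing care is the $O(h^2)$ accuracy of the finite difference. This requires $G_i$ to have three continuous derivatives in $t$. That holds because $G_i$ is a polynomial in the $q_r(t)$, and each $q_r(t)$ is a polynomial in $t$, so $G_i$ is smooth on $(0,1)$. A Taylor expansion with remainder then bounds the error by $h^2\sup|G_i'''|/6$.
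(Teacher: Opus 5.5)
Your proposal is correct and follows essentially the same route as the paper, whose justification is written into the statement itself: stratum-level independence gives the product of binomial PGFs, sequential convolution costs $\sum_r O(rm^2)=O(n^2)$, and $G_i(t)=\Pr\{C(t)\ge i\}$ is recovered by a reverse cumulative sum. Your additions fill in details the paper only asserts: the check that grouping the subsets in \eqref{eq:Gi} by stratum counts reproduces the binomial coefficients, and the Taylor-remainder justification of the $O(h^2)$ error, which uses the fact that each $\IB_{r,k-r+1}$ is a polynomial in $t$.
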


The reduction provided by this algorithm  is substantial:

\medskip
\begin{center}
\begin{tabular}{lrrr}
\toprule
Design $(m,k)$ & $n$ & Direct: $2^n n_t$ & Polynomial: $n^2 n_t$ \\
\midrule
$(5,3)$  & $15$ & $\approx 1.6\times 10^{6}$  & $3{,}375$   \\
$(5,5)$  & $25$ & $\approx 8.4\times 10^{8}$  & $15{,}625$  \\
$(10,5)$ & $50$ & $\approx 3.6\times 10^{16}$ & $125{,}000$ \\
\bottomrule
\end{tabular}
\end{center}
\medskip

\noindent For $(m,k)=(5,5)$, the polynomial algorithm is approximately $54{,}000$ times faster than direct evaluation. For $(m,k)=(10,5)$, direct evaluation is entirely infeasible, whereas the polynomial algorithm completes in seconds. When the evaluation grid $\{t_1,\ldots,t_{n_t}\}$ is chosen in advance and fixed, the resulting weight table can be precomputed for a given design $(m,k,p)$ and reused thereafter.

\subsection{Pooled-scale RSS construction}
We now introduce a simpler class of pooled RSS L-estimators for each fixed stratum index $r$, we map the pooled probability scale
through the stratum transform
\[
g_r(u)=\IB_{r,k-r+1}(u),
\qquad 0<u<1,
\]
and then apply the usual SRS Beta weighting at the induced stratum level
\[
p_r=g_r(p)=\IB_{r,k-r+1}(p).
\]
Under perfect ranking, $F_{(r)}(x)=g_r(F(x))$, and therefore
\[
F(x)\ge p
\iff
g_r(F(x))\ge g_r(p)
\iff
F_{(r)}(x)\ge p_r.
\]
Hence the population $p$th quantile satisfies
\[
\zeta_p=F^{-1}(p)=F_{(r)}^{-1}(p_r),
\qquad r=1,\ldots,k.
\]
This suggests constructing, for each $r$, a pooled estimator that concentrates its weight around
$u=p$ on the pooled probability scale while respecting the stratum-specific transform $g_r$. Let
\[
Y^*_{(1)}\le\cdots\le Y^*_{(n)}
\]
denote the pooled ordered RSS sample, where $n=mk$, and let
\[
Q_n(u)=\widehat F_n^{-1}(u),
\qquad 0<u<1,
\]
be the empirical quantile function of the pooled RSS sample. Define
\[
j^*_{p_r}=\lfloor (m+1)p_r\rfloor,
\]
with truncation to $\{1,\ldots,m\}$ if necessary, and let
\begin{align}\label{eq:psii}
\psi_{m,r}(u)
:=
\mathcal{J}_{j^*_{p_r},\,m-j^*_{p_r}+1}\!\bigl(g_r(u)\bigr)\,g_r'(u),
\qquad 0<u<1.
\end{align}
The factor $g_r'(u)$ is the Jacobian of the stratum transform and ensures that the pooled-scale LF
score is the correct pullback of the Beta density from the stratum probability scale. The resulting
pooled LF estimator is
\begin{equation}\label{eq:LF-stratum}
\widehat{\zeta}^*_{LF,p_r}
=
\frac{1}{n}\sum_{i=1}^n
\psi_{m,r}\!\left(\frac{i}{n}\right)Y^*_{(i)}.
\end{equation}
Its total weight is
\[
\frac1n\sum_{i=1}^n \psi_{m,r}\!\left(\frac{i}{n}\right)=1+O(n^{-1}),
\]
by standard grid approximation.

For the Harrell--Davis version, define
\[
a_r=(m+1)p_r,
\qquad
b_r=(m+1)(1-p_r),
\]
and set
\[
B_{m,r}(u):=\IB_{a_r,b_r}\!\bigl(g_r(u)\bigr),
\qquad 0\le u\le 1.
\]
This is a distribution function on $[0,1]$ obtained by composing the Beta$(a_r,b_r)$ CDF with the
stratum transform. Replacing $F^{-1}$ by the pooled empirical quantile function gives the pooled HD
estimator
\begin{equation}\label{eq:HD-stratum}
\widehat{\zeta}^*_{HD,p_r}
=
\sum_{i=1}^n W^*_{n,i,p_r}\,Y^*_{(i)},
\qquad
W^*_{n,i,p_r}
=
B_{m,r}\!\left(\frac{i}{n}\right)-B_{m,r}\!\left(\frac{i-1}{n}\right).
\end{equation}
Thus both estimators are linear combinations of the pooled order statistics, but their weights are
adapted to the stratum-specific target level $p_r$ through the transform $g_r$.

As in the original RSS proposal, we obtain $k$ pooled estimators,
$\widehat{\zeta}^*_{z,p_1},\ldots,\widehat{\zeta}^*_{z,p_k}$, one for each $r$, and then combine them
through an interpolation on the ordered estimator values. Let
$
\widehat{\zeta}^*_{z,(1)}\le \cdots \le \widehat{\zeta}^*_{z,(k)},
$
with
$
z\in\{LF,HD\},
$
denote the ordered values of
$
\widehat{\zeta}^*_{z,p_1},\ldots,\widehat{\zeta}^*_{z,p_k}.
$
Define
$
\ell_p=\lfloor (k-1)p\rfloor+1,
$
and
$
w_p=(k-1)p-\lfloor (k-1)p\rfloor.
$
Then we set
\begin{equation}\label{eq:RSS-final}
\widehat{\zeta}_{z,RSS}
=
(1-w_p)\widehat{\zeta}^*_{z,(\ell_p)}
+
w_p\,\widehat{\zeta}^*_{z,(\ell_p+1)},
\qquad z\in\{LF,HD\},
\end{equation}
with the convention that if $\ell_p=k$, then
$
\widehat{\zeta}_{z,RSS}=\widehat{\zeta}^*_{z,(k)}.
$
This is the same aggregation rule used earlier; the difference is that the component estimators are
now the pooled transformed-scale LF and HD estimators in
\eqref{eq:LF-stratum}--\eqref{eq:HD-stratum}.

\subsection{Asymptotic properties of the pooled RSS estimators}

We now study the asymptotic behavior of the pooled estimators
\eqref{eq:LF-stratum} and \eqref{eq:HD-stratum} with $k$ fixed and $m\to\infty$, so that
$n=mk\to\infty$.

\begin{itemize}
\item[(A1)] The population distribution $F$ is continuous and strictly increasing in a neighborhood of $\zeta_p$, with density $f$ positive and continuously differentiable there.
\item[(A2)] The target probability level satisfies $p\in(0,1)$.
\item[(A3)] The set size $k$ is fixed while $m\to\infty$.
\item[(A4)] $\E[Y^2]<\infty$.
\end{itemize}

\begin{theorem}\label{thm:pooled-rate}
Suppose (A1)--(A4) hold and the ranking is perfect. For each fixed
$r \in \{1,\ldots,k\}$, let $\hat{\zeta}^{\,*}_{z,p_r}$ denote either the pooled
LF estimator in \eqref{eq:LF-stratum} or the pooled HD estimator in \eqref{eq:HD-stratum}. Then, as
$m \to \infty$,
\[
\hat{\zeta}^{\,*}_{z,p_r} \xrightarrow{P} \zeta_p.
\]
Moreover,
\[
\hat{\zeta}^{\,*}_{z,p_r} - \zeta_p = O_p(n^{-1/2}).
\]

In addition, the pooled empirical quantile
\[
\tilde{\zeta}_{n,p} := Q_n(p)
\]
satisfies
\[
\sqrt{n}\,\bigl(\tilde{\zeta}_{n,p}-\zeta_p\bigr)
\overset{\mathcal{L}}{\longrightarrow}
N\!\left(0,\frac{\sigma_p^2}{f(\zeta_p)^2}\right),
\qquad
\sigma_p^2 = \frac1k \sum_{s=1}^k p_s(1-p_s),
\quad
p_s = \IB_{s,k-s+1}(p).
\]
\end{theorem}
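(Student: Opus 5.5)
The plan is to prove the last claim, the asymptotic normality of the pooled empirical quantile $\tilde\zeta_{n,p}=Q_n(p)$, first. The rate results for the two pooled L-estimators then follow by representing each as a weighted integral of $Q_n$ with weights concentrating at $u=p$ at scale $m^{-1/2}$.

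\emph{Step 1 (pooled empirical quantile).} I would use the switching identity $\{Q_n(p)\le y\}=\{\widehat F_n(y)\ge p\}$ (up to the $r_p/n$ convention, which costs $O(n^{-1})$). Take $y_n=\zeta_p+x/\sqrt n$. Under perfect ranking, the consistency identity \eqref{eq:consistent} gives $\E\widehat F_n(y_n)=F(y_n)=p+f(\zeta_p)x/\sqrt n+o(n^{-1/2})$ by (A1). By independence,
\[
n\Var\{\widehat F_n(y_n)\}=\frac1k\sum_{s=1}^k F_{(s)}(y_n)\{1-F_{(s)}(y_n)\}\to\frac1k\sum_{s=1}^k p_s(1-p_s)=\sigma_p^2 ,
\]
using $F_{(s)}(\zeta_p)=\IB_{s,k-s+1}(p)=p_s$ and continuity of $F_{(s)}=g_s\circ F$. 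The summands are bounded, independent, and their variances sum to order $n$. So the Lindeberg--Feller CLT for the triangular array $\{\mathbf 1(Y_{(r)j}\le y_n)\}$ gives $\Pr(\sqrt n(Q_n(p)-\zeta_p)\le x)\to\Phi(x f(\zeta_p)/\sigma_p)$, which is the stated limit.

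\emph{Step 2 (uniform control of $Q_n$ near $p$).} Write $\widehat F_n=k^{-1}\sum_r \widehat F_{r,m}$, where $\widehat F_{r,m}$ is the i.i.d.\ empirical CDF of stratum $r$, whose observations have CDF $F_{(r)}$. Applying the DKW inequality stratum by stratum gives $\sup_y|\widehat F_n(y)-F(y)|=O_p(m^{-1/2})=O_p(n^{-1/2})$, since $k$ is fixed. Because $f$ is bounded away from $0$ near $\zeta_p$, inverting yields $\sup_{|u-p|\le\delta}|Q_n(u)-F^{-1}(u)|=O_p(n^{-1/2})$ for a small fixed $\delta>0$.

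\emph{Step 3 (the L-estimators).} For HD the estimator is exactly $\int_0^1 Q_n(u)\,dB_{m,r}(u)$, where $B_{m,r}$ is the law of $g_r^{-1}(V)$ with $V\sim\text{Beta}(a_r,b_r)$. Since $V$ has mean $p_r$ and standard deviation $O(m^{-1/2})$, and $g_r$ is a smooth increasing bijection of $(0,1)$ with $g_r'(p)>0$, the pulled-back law $B_{m,r}$ has mean $p+O(m^{-1})$ and spread $O(m^{-1/2})$. I would decompose
\[
\hat\zeta^*_{HD,p_r}-\zeta_p=\int_{|u-p|\le\delta}\{Q_n(u)-F^{-1}(u)\}\,dB_{m,r}+\int_{|u-p|\le\delta}\{F^{-1}(u)-\zeta_p\}\,dB_{m,r}+R_n .
\]
The first term is $O_p(n^{-1/2})$ by Step 2. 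The second is bounded by $C\int|u-p|\,dB_{m,r}=O(m^{-1/2})$, using the local Lipschitz property of $F^{-1}$ from (A1). The remainder $R_n$ collects the tail region $|u-p|>\delta$, and I would bound it by $\max_i|Y^*_{(i)}|\cdot B_{m,r}(\{|u-p|>\delta\})$ plus $|\zeta_p|$ times the same mass. Beta concentration (a Chernoff/Hoeffding-type bound) makes this mass $O(e^{-cm})$. By (A4), $\max_i|Y_{[r]j}|=o_p(n^{1/2})$, so $R_n=o_p(n^{-1/2})$.

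For LF, I would compare $n^{-1}\sum_i\psi_{m,r}(i/n)Y^*_{(i)}$ with $\int Q_n\,\psi_{m,r}\,du$. Here $g_r'=\mathcal J_{r,k-r+1}$ is a bounded polynomial, and the Beta density composed with $g_r$ has total variation $O(m^{1/2})$ on $|u-p|\le\delta$. Since $Q_n$ is a step function on the same grid, the Riemann error near $p$ is $O(m^{1/2}/n)\cdot O_p(1)=O_p(n^{-1/2})$. The truncation of $j^*_{p_r}$ and the $\lfloor\cdot\rfloor$ shift move the Beta mode by $O(1/m)$ only. The total-weight discrepancy $1+O(n^{-1})$ multiplies $\zeta_p$ and is negligible. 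Consistency in probability is then immediate from the $O_p(n^{-1/2})$ rate.

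The main obstacle I expect is the tail remainder $R_n$ together with the LF discretization. Near $u=0,1$ the pulled-back Beta score can be irregular, and the extreme order statistics are controlled only by the second-moment assumption (A4). My first attempt would be to pair the exponential Beta tail bound with the crude bound $\max|Y|=o_p(\sqrt n)$. If that proves too loose for LF at the extreme grid points, I would fall back on a Cauchy--Schwarz bound $\bigl(n^{-1}\sum_i Y_{(i)}^{*2}\bigr)^{1/2}\bigl(n^{-1}\sum_{|i/n-p|>\delta}\psi_{m,r}(i/n)^2\bigr)^{1/2}$. This bound is $O_p(1)\cdot O(e^{-cm}\mathrm{poly}(m))$ by (A4).
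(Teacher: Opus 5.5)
Your proof is correct. It shares the paper's skeleton: each component estimator is written as $\int_0^1 Q_n\,dH$ for a weighting measure concentrating at $u=p$, and the CLT for $Q_n(p)$ is proved separately. The key steps, however, are done differently.

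The paper compares $\hat\zeta^{*}_{z,p_r}$ directly with $Q_n(p)$, appealing to the oscillation of the monotone $Q_n$ on an $O(n^{-1/2})$ window. It obtains the deterministic part from a second-order Taylor expansion of $Q$ against the first two moments of $H^{(z)}_{n,r}$. You instead compare with $\int Q\,dH$ through a three-term split:
\begin{itemize}
\item a stratum-wise DKW argument giving the uniform bound $\sup_{|u-p|\le\delta}|Q_n(u)-Q(u)|=O_p(n^{-1/2})$;
\item a Lipschitz bound for the bias term;
\item Beta concentration together with $\max|Y_{[r]j}|=o_p(n^{1/2})$ for the tail term.
\end{itemize}

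Your version is the more complete one, for three reasons. First, the paper's oscillation claim really requires a uniform bound on $Q_n-Q$ across the window, which is exactly what DKW supplies. Second, the paper never separately treats the mass of $H^{(z)}_{n,r}$ away from $p$, where $Q_n$ is unbounded; your tail term is where (A4) genuinely enters. Third, your switching-identity argument at $\zeta_p+x/\sqrt n$, with a triangular-array Lindeberg--Feller CLT, is precisely what the paper's ``quantile delta method'' step needs. A CLT for $\widehat F_n$ at the single point $\zeta_p$ does not by itself give the limit law of the inverse.

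In exchange, the paper's route formally yields sharper intermediate orders. Its bias is $O(n^{-1})$ rather than your $O(m^{-1/2})$. Its LF remainder is $O_p(n^{-1})$, obtained by normalizing the discrete measure, rather than your $O_p(n^{-1/2})$ total-variation Riemann bound. Neither refinement is needed for the stated rate.

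Finally, the LF tail obstacle you flag is benign. Write $\mathcal{J}_{j,m-j+1}(v)=m\binom{m-1}{j-1}v^{j-1}(1-v)^{m-j}$, so the LF score is $m$ times a binomial probability. On $\{|u-p|>\delta\}$, strict monotonicity of $g_r$ keeps $g_r(u)$ at distance at least some $\delta'>0$ from $p_r$, and $g_r'$ is bounded. Hoeffding's inequality then gives $\psi_{m,r}(u)\le Cme^{-cm}$ there for all large $m$. So your first bound already works, and the Cauchy--Schwarz fallback is not needed.
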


\begin{proof}
Fix $r \in \{1,\ldots,k\}$. Let
$
Q(u)=F^{-1}(u), \qquad Q_n(u)=\hat{F}_n^{-1}(u), \qquad 0<u<1,
$
where $\hat{F}_n$ is the pooled empirical CDF.

For the LF estimator, define the finite signed measure
\[
\widetilde H^{(LF)}_{n,r}
=
\sum_{i=1}^n \widetilde\omega^{(LF)}_{n,i,r}\,\delta_{i/n},
\qquad
\widetilde\omega^{(LF)}_{n,i,r}
=
\frac1n \psi_{m,r}\!\left(\frac{i}{n}\right),
\]
where $\delta_x$ denotes the Dirac mass at $x$. Then
\[
\hat{\zeta}^{\,*}_{LF,p_r}
=
\int_0^1 Q_n(u)\,d\widetilde H^{(LF)}_{n,r}(u).
\]
Let
\[
S^{(LF)}_{n,r}
=
\widetilde H^{(LF)}_{n,r}([0,1])
=
\frac1n \sum_{i=1}^n \psi_{m,r}\!\left(\frac{i}{n}\right).
\]
Since $g_r(u)= \IB_{r, k-r+1}(u)$  is strictly increasing with derivative
$g_r'(u)=\mathcal{J}_{r,k-r+1}(u)>0$,  using \eqref{eq:psii}, the substitution $v=g_r(u)$ gives
\[
\int_0^1 \psi_{m,r}(u)\,du
=
\int_0^1
\mathcal{J}_{j^*_{p_r},\,m-j^*_{p_r}+1}\!\bigl(g_r(u)\bigr)\,g_r'(u)\,du
=
\int_0^1
\mathcal{J}_{j^*_{p_r},\,m-j^*_{p_r}+1}(v)\,dv
=1.
\]
Hence, by standard grid approximation,
\[
S^{(LF)}_{n,r}=1+O(n^{-1}).
\]
Define the normalized probability measure
\[
H^{(LF)}_{n,r}
=
\frac{\widetilde H^{(LF)}_{n,r}}{S^{(LF)}_{n,r}}.
\]
Then
\[
\hat{\zeta}^{\,*}_{LF,p_r}
=
S^{(LF)}_{n,r}\int_0^1 Q_n(u)\,dH^{(LF)}_{n,r}(u)
=
\int_0^1 Q_n(u)\,dH^{(LF)}_{n,r}(u)+O_p(n^{-1}),
\]
since $Q_n(p)=O_p(1)$.

For the HD estimator, let $H^{(HD)}_{n,r}$ be the probability measure on $[0,1]$
with distribution function
\[
B_{m,r}(u)=\IB_{a_r,b_r}\!\bigl(g_r(u)\bigr).
\]
Then the interval masses of $H^{(HD)}_{n,r}$ are exactly the weights in
\eqref{eq:HD-stratum}, so that
\[
\hat{\zeta}^{\,*}_{HD,p_r}
=
\int_0^1 Q_n(u)\,dH^{(HD)}_{n,r}(u).
\]

Thus, for each $z\in\{LF,HD\}$,
\[
\hat{\zeta}^{\,*}_{z,p_r}
=
\int_0^1 Q_n(u)\,dH^{(z)}_{n,r}(u)+R^{(z)}_{n,r},
\]
where
\[
R^{(HD)}_{n,r}=0,
\qquad
R^{(LF)}_{n,r}=O_p(n^{-1}).
\]

The weighting measures $H^{(z)}_{n,r}$ concentrate around $u=p$. More precisely,
for both $z\in\{LF,HD\}$,
\[
\int_0^1 u\,dH^{(z)}_{n,r}(u)=p+O(n^{-1}),
\qquad
\int_0^1 (u-p)^2\,dH^{(z)}_{n,r}(u)=O(n^{-1}).
\]
Since $f$ is positive and continuously differentiable near $\zeta_p=Q(p)$,
the quantile function $Q$ is twice continuously differentiable in a neighborhood
of $p$. A second-order Taylor expansion yields
\[
Q(u)=Q(p)+Q'(p)(u-p)+\frac12 Q''(\xi_u)(u-p)^2,
\]
for some $\xi_u$ between $u$ and $p$. Integrating against $H^{(z)}_{n,r}$ gives
\[
\int_0^1 Q(u)\,dH^{(z)}_{n,r}(u)
=
Q(p)+O(n^{-1})
=
\zeta_p+O(n^{-1}).
\]

Next, write
\[
\hat{\zeta}^{\,*}_{z,p_r}-Q_n(p)
=
\int_0^1 \{Q_n(u)-Q_n(p)\}\,dH^{(z)}_{n,r}(u)+R^{(z)}_{n,r}.
\]
Because the weighting measures concentrate around $u=p$ and $Q_n$ is monotone,
the integral is controlled by the oscillation of $Q_n$ in an $O(n^{-1/2})$
neighborhood of $p$. Together with the uniform consistency of $Q_n$ and the fact
that $R^{(LF)}_{n,r}=O_p(n^{-1})$, $R^{(HD)}_{n,r}=0$, this yields
\[
\hat{\zeta}^{\,*}_{z,p_r}-Q_n(p)=o_p(1).
\]
Hence, since $Q_n(p)\xrightarrow{P}\zeta_p$,
\[
\hat{\zeta}^{\,*}_{z,p_r}\xrightarrow{P}\zeta_p.
\]

Moreover, because
\[
Q_n(p)-\zeta_p=O_p(n^{-1/2})
\]
and the weighting measures are centered at $p$ up to $O(n^{-1})$, it follows that
\[
\hat{\zeta}^{\,*}_{z,p_r}-\zeta_p=O_p(n^{-1/2}).
\]

It remains to identify the first-order limit law of the pooled empirical quantile
$\tilde{\zeta}_{n,p}=Q_n(p)$. At the fixed point $y=\zeta_p$, the pooled EDF is
\[
\hat{F}_n(\zeta_p)
=
\frac1n
\sum_{s=1}^k \sum_{j=1}^m
1\{Y_{[s]j}\le \zeta_p\}.
\]
Under perfect ranking, the summands in stratum $s$ are Bernoulli with mean
\[
p_s=F_{(s)}(\zeta_p)=\IB_{s,k-s+1}(p).
\]
Therefore
\[
E[\hat{F}_n(\zeta_p)] = p,
\qquad
n\,\mathrm{Var}\{\hat{F}_n(\zeta_p)\}
=
\frac1k\sum_{s=1}^k p_s(1-p_s)
=
\sigma_p^2.
\]
Since the pooled sample consists of independent, non-identically distributed
observations, the Lindeberg--Feller theorem gives
\[
\sqrt{n}\,\{\hat{F}_n(\zeta_p)-p\}
\overset{\mathcal{L}}{\longrightarrow}
N(0,\sigma_p^2).
\]
Because $f(\zeta_p)>0$, the quantile delta method yields
\[
\sqrt{n}\,\bigl(Q_n(p)-\zeta_p\bigr)
\overset{\mathcal{L}}{\longrightarrow}
N\!\left(0,\frac{\sigma_p^2}{f(\zeta_p)^2}\right).
\]
This completes the proof.
\end{proof}

\begin{remark}
Theorem \ref{thm:pooled-rate} establishes consistency and first-order rate control for each pooled
transformed-scale estimator, and identifies the asymptotic limit law of the
corresponding pooled empirical quantile. A full first-order asymptotic theory for
the transformed estimators themselves, and for the final estimator in~\eqref{eq:RSS-final},
would require additional work.
Thus the pooled transformed-scale estimators are consistent, have the same
first-order rate as the pooled empirical quantile, and are motivated by weighting
measures that concentrate around $u=p$. Establishing full first-order asymptotic
equivalence between $\hat{\zeta}^{\,*}_{z,p_r}$ and $\tilde{\zeta}_{n,p}$ is left
for future work.
\end{remark}

\subsection{Imperfect ranking}

We now consider imperfect ranking. The component estimators
\eqref{eq:LF-stratum}--\eqref{eq:HD-stratum} are still computed from the pooled ordered RSS sample,
but the judgment-ranked observations now have perturbed stratum distributions $F_{[1]},\ldots,F_{[k]}$.
The key point is that the pooled empirical CDF remains the primary object: if
\begin{equation}\label{eq:mixture-identity-imperfect}
\frac1k\sum_{r=1}^k F_{[r]}(y)=F(y),
\qquad y\in\mathbb R,
\end{equation}
then the pooled empirical quantile still targets the population quantile $\zeta_p$ even though the
individual judgment strata need not coincide with the perfect-ranking strata.

\begin{proposition}\label{prop:imperfect-rate}
Assume (A1)--(A4), and suppose
\[
\frac1k \sum_{r=1}^k F_{[r]}(y)=F(y), \qquad y\in\mathbb R.
\]
Define
\[
\bar\varepsilon
=
\max_{1\le r\le k}\sup_y |F_{[r]}(y)-F_{(r)}(y)|.
\]
For each fixed $r\in\{1,\ldots,k\}$, let $\hat{\zeta}^{\,*}_{z,p_r}$ denote either
pooled estimator in \eqref{eq:LF-stratum} or \eqref{eq:HD-stratum}, and let
\[
q_s=F_{[s]}(\zeta_p), \qquad s=1,\ldots,k.
\]
Then
\[
\hat{\zeta}^{\,*}_{z,p_r}\xrightarrow{P}\zeta_p,
\qquad
\hat{\zeta}^{\,*}_{z,p_r}-\zeta_p=O_p(n^{-1/2}).
\]
Moreover, the pooled empirical quantile $Q_n(p)$ satisfies
\[
\sqrt{n}\,\{Q_n(p)-\zeta_p\}
\overset{\mathcal{L}}{\longrightarrow}
N\!\left(0,\frac{\sigma_{p,[\cdot]}^2}{f(\zeta_p)^2}\right),
\qquad
\sigma_{p,[\cdot]}^2
=
\frac1k\sum_{s=1}^k q_s(1-q_s)
=
\sigma_p^2+O(\bar\varepsilon).
\]
In particular, if $\bar\varepsilon\to 0$, the pooled empirical quantile attains the
same first-order asymptotic variance as under perfect ranking.
\end{proposition}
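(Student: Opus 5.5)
The plan is to rerun the proof of Theorem~\ref{thm:pooled-rate} and note which steps actually used perfect ranking. Only the pooled empirical CDF $\widehat F_n$ enters the estimators, and under the consistency identity \eqref{eq:mixture-identity-imperfect} it remains unbiased for $F$. The deterministic part of that proof carries over unchanged, because it only involves the weighting measures $H^{(z)}_{n,r}$, which are fixed functions of $(k,m,p)$ and $g_r$ and do not depend on the data law. This includes the normalisation $S^{(LF)}_{n,r}=1+O(n^{-1})$, the concentration properties $\int u\,dH^{(z)}_{n,r}=p+O(n^{-1})$ and $\int (u-p)^2\,dH^{(z)}_{n,r}=O(n^{-1})$, and the Taylor expansion of $Q$ near $p$ under (A1). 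So the only probabilistic inputs to redo are three: (i) $Q_n(p)\xrightarrow{P}\zeta_p$ with rate $n^{-1/2}$; (ii) uniform control of $Q_n$ near $p$ (local oscillation); and (iii) the CLT for $\widehat F_n(\zeta_p)$.

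For (i) and (iii), take $y=\zeta_p$. The indicators $\mathbf 1\{Y_{[s]j}\le \zeta_p\}$ are independent Bernoulli$(q_s)$ variables, and by the identity $k^{-1}\sum_s q_s=F(\zeta_p)=p$. Hence $\E\widehat F_n(\zeta_p)=p$ and $n\Var\widehat F_n(\zeta_p)=k^{-1}\sum_s q_s(1-q_s)=\sigma^2_{p,[\cdot]}$. The variance is bounded by $1/4$ and the summands are bounded, so the Lindeberg condition holds trivially with $k$ fixed, and Lindeberg--Feller gives asymptotic normality. Consistency of $Q_n(p)$ follows exactly as in Proposition~\ref{prop:ORSS-target}, using only $\E\widehat F_n(y)=F(y)$ and the $1/(4n)$ variance bound at $\zeta_p\pm\varepsilon$. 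For the quantile delta step, I would use the standard Bahadur-type switching argument. Write $\Pr(\sqrt n(Q_n(p)-\zeta_p)\le x)=\Pr(\widehat F_n(\zeta_p+x/\sqrt n)\ge p)$ (up to discreteness). Then center at $F(\zeta_p+x/\sqrt n)=p+f(\zeta_p)x/\sqrt n+o(n^{-1/2})$, and show that the variance of $\widehat F_n(\zeta_p+x/\sqrt n)-\widehat F_n(\zeta_p)$ is $o(1/n)$. This holds because each $F_{[s]}$ is continuous (it is dominated by $kF$, since $F_{[s]}\le \sum_r F_{[r]}=kF$ pointwise in increments). The same increment bound gives $O_p(n^{-1/2})$ local oscillation of $Q_n$, which is (ii). That in turn yields $\hat\zeta^{\,*}_{z,p_r}-Q_n(p)=o_p(1)$ and the $O_p(n^{-1/2})$ rate for both LF and HD, exactly as in the perfect-ranking proof.

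For the variance comparison, note that $|q_s-p_s|=|F_{[s]}(\zeta_p)-F_{(s)}(\zeta_p)|\le\bar\varepsilon$. Since $x\mapsto x(1-x)$ is $1$-Lipschitz on $[0,1]$, $|\sigma^2_{p,[\cdot]}-\sigma^2_p|\le k^{-1}\sum_s|q_s-p_s|\le\bar\varepsilon$. So $\sigma^2_{p,[\cdot]}=\sigma^2_p+O(\bar\varepsilon)$, and the final claim follows when $\bar\varepsilon\to0$.

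I expect the main obstacle to be step (ii): making rigorous the oscillation control of $Q_n$ on the support of $H^{(z)}_{n,r}$, which is not compact but only concentrated near $p$. My plan is to split the integral into $|u-p|\le \delta$ and the complement. On $|u-p|\le\delta$, I would use monotonicity of $Q_n$ together with a DKW-type bound for independent non-identically distributed samples; such a bound holds via a union of $k$ i.i.d.\ strata, each with its own DKW inequality. On the complement, I would bound the tail mass of $H^{(z)}_{n,r}$ by Chebyshev using the $O(n^{-1})$ second moment, and pair it with (A4) to control $\int |Q_n|\,dH$ there through Cauchy--Schwarz. This tail piece is exactly as delicate as in Theorem~\ref{thm:pooled-rate}, and no new difficulty arises from imperfect ranking.
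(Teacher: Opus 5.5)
Your proposal is correct and follows the paper's route. Like the paper, you reuse the deterministic weighting-measure representation from Theorem~\ref{thm:pooled-rate}, use the mixture identity to keep $\widehat F_n$ unbiased for $F$, apply Lindeberg--Feller plus the quantile delta method to the Bernoulli$(q_s)$ indicators at $\zeta_p$, and conclude from $|q_s-p_s|\le\bar\varepsilon$. Your extra details (the switching argument with increment control from $\sum_s F_{[s]}=kF$, the stratumwise DKW bound, and the Chebyshev tail treatment of $H^{(z)}_{n,r}$ with $\max_i|Y^*_{(i)}|=o_p(\sqrt n)$ from (A4)) make rigorous steps that the paper only asserts by reference to Theorem~\ref{thm:pooled-rate}.
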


\begin{proof}
Under the mixture identity,
$
E[\hat{F}_n(y)]
=
\frac1k\sum_{s=1}^k F_{[s]}(y)
=
F(y)
$,
$y\in\mathbb R$. Hence the pooled empirical quantile $Q_n(p)=\hat{F}_n^{-1}(p)$ continues to target
$
\zeta_p=F^{-1}(p).
$
The same representation as in Theorem 1 gives
\[
\hat{\zeta}^{\,*}_{z,p_r}
=
\int_0^1 Q_n(u)\,dH^{(z)}_{n,r}(u)+R^{(z)}_{n,r},
\]
with $R^{(HD)}_{n,r}=0$ and $R^{(LF)}_{n,r}=O_p(n^{-1})$. Since the weighting
measures remain deterministic and concentrate around $u=p$, the same argument as
in Theorem 1 yields
\[
\hat{\zeta}^{\,*}_{z,p_r}\xrightarrow{P}\zeta_p,
\qquad
\hat{\zeta}^{\,*}_{z,p_r}-\zeta_p=O_p(n^{-1/2}).
\]

At the fixed point $y=\zeta_p$, the pooled EDF satisfies
\[
\hat{F}_n(\zeta_p)
=
\frac1n
\sum_{s=1}^k\sum_{j=1}^m
1\{Y_{[s]j}\le \zeta_p\},
\]
where the summands in judgment stratum $s$ are Bernoulli with mean
$
q_s=F_{[s]}(\zeta_p).
$
Therefore
\[
E[\hat{F}_n(\zeta_p)] = p,
\qquad
n\,\mathrm{Var}\{\hat{F}_n(\zeta_p)\}
=
\frac1k\sum_{s=1}^k q_s(1-q_s)
=
\sigma_{p,[\cdot]}^2.
\]
By the Lindeberg--Feller theorem,
\[
\sqrt{n}\,\{\hat{F}_n(\zeta_p)-p\}
\overset{\mathcal{L}}{\longrightarrow}
N(0,\sigma_{p,[\cdot]}^2).
\]
Since $f(\zeta_p)>0$, the quantile delta method gives
\[
\sqrt{n}\,\{Q_n(p)-\zeta_p\}
\overset{\mathcal{L}}{\longrightarrow}
N\!\left(0,\frac{\sigma_{p,[\cdot]}^2}{f(\zeta_p)^2}\right).
\]
Finally, because
$
|q_s-p_s|
=
|F_{[s]}(\zeta_p)-F_{(s)}(\zeta_p)|
\le \bar\varepsilon,
$
we have
\[
\sigma_{p,[\cdot]}^2-\sigma_p^2
=
\frac1k\sum_{s=1}^k
\Bigl(q_s(1-q_s)-p_s(1-p_s)\Bigr)
=
O(\bar\varepsilon),
\]
which proves the stated variance comparison.
\end{proof}
\section{Monte Carlo Comparisons}
\label{sec5}

In this section we perform  Monte Carlo simulations to  study the performance of  eight estimators: the empirical quantile under SRS, denoted
SRS(EMP); the empirical quantile applied to the pooled RSS sample, denoted
RSS(EMP) \citep{chen2000ranked}; the Stigler LF and Harrell--Davis estimators
under SRS, denoted SRS(LF) and SRS(HD); the pooled transformed-scale RSS component estimators of
Section~\ref{sec4}, denoted RSS(LF) and RSS(HD); and the two pooled ORSS
L-estimators, denoted ORSS(LF) and ORSS(HD). Three RSS designs are considered:
$(m,k)\in\{(5,3),(5,5),(10,5)\}$, yielding total sample sizes $n\in\{15,25,50\}$.
The target quantile levels are $p\in\{0.1,0.2,\ldots,0.9\}$, and all results
are based on $B=100{,}000$ Monte Carlo replicates.

Three parent distributions are used, chosen to span the main shapes encountered
in environmental and biomedical applications: the standard normal $N(0,1)$
(symmetric, light tails), the standard exponential $\operatorname{Exp}(1)$
(right-skewed, moderate tails), and the Weibull distribution with shape $2$ and
scale $1$, denoted $G(2)$ (right-skewed, lighter upper tail than exponential).

Imperfect ranking is modelled through a concomitant variable
\[
X = \rho\,\frac{Y-\mu_Y}{\sigma_Y}+\sqrt{1-\rho^2}\,Z,
\qquad Z\sim N(0,1)\text{ independent of }Y,
\]
where $(\mu_Y,\sigma_Y)$ denote the population mean and standard deviation of $Y$
and $\rho\in\{1.00,\,0.75,\,0.50\}$ controls ranking quality. Within each cycle
of $k$ units, the $k$ units are ranked by $X$ and the unit assigned judgment
rank $r$ is selected for measurement, simulating the practical scenario in which
a surrogate variable is observed cheaply while the variable of interest is costly
to measure. Perfect ranking ($\rho=1$) recovers the theoretical ideal; moderate
ranking ($\rho=0.75$) corresponds to a Kendall rank correlation between $X$ and
$Y$ of approximately $0.54$; and weak ranking ($\rho=0.50$) corresponds to a
Kendall rank correlation of approximately $0.33$. The relative efficiency of an
estimator $\widehat{\zeta}$ at level $p$ is defined as
\[
\operatorname{RE}(p)
=
\frac{\operatorname{MSE}\!\left(\widehat{\zeta}_{EM,SRS}\right)}
     {\operatorname{MSE}\!\left(\widehat{\zeta}\right)},
\]
where $\operatorname{MSE}$ is computed over the $B=100{,}000$ replicates and the
numerator uses SRS(EMP) as the common reference. Values $\operatorname{RE}(p)>1$
indicate improvement over the SRS baseline.

As noted in Section~\ref{sec4}, the theoretical ORSS weights $\psi_{r_p}(i/n)$ and
$\widetilde{W}^*_{n,i,p}$ are functions of $(k,m,p)$ only  and are  pre-tabulable
for any fixed design. In the simulation, however, a natural implementation
evaluates the polynomial algorithm  at the observed order-statistic values of each
realized RSS sample as discussed in  Section~\ref{subsec:poly}, so that $G_i$ and $g_i$ are assessed on a data-dependent
grid that varies across replicates. Under this implementation strategy, the
per-replicate cost of $O(n^2)$ polynomial operations must be repeated for each of
the $B=100{,}000$ replicates. For $(m,k)=(10,5)$, this cost remains computationally
demanding, and we therefore restrict the ORSS estimators to
$(m,k)\in\{(5,3),(5,5)\}$ in the simulation study. We emphasize that with a fixed pre-tabulated
probability grid one could avoid per-replicate recomputation entirely; the
restriction here reflects a choice of simulation implementation, not a fundamental
limitation of the estimators.

\begin{figure}
\begin{center}
\includegraphics[scale=1.1]{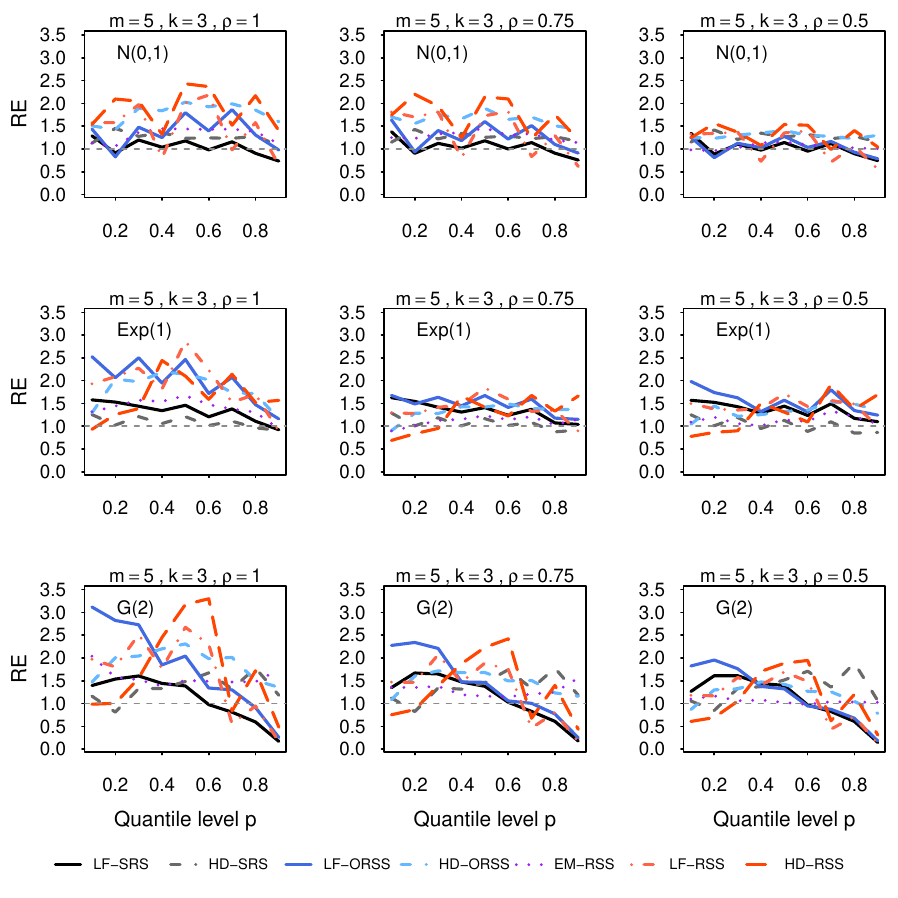}
\caption{\label{pic1}Relative efficiency (RE) of eight quantile estimators
under simple random sampling (SRS) and ranked set sampling (RSS), evaluated
against the SRS empirical quantile estimator SRS(EMP), for $(m,k)=(5,3)$
(total sample size $n=15$). Rows correspond to the parent distributions
$N(0,1)$, $\operatorname{Exp}(1)$, and $G(2)$ (Weibull with shape~2 and
scale~1). Columns correspond to ranking quality
$\rho\in\{1.00,\,0.75,\,0.50\}$ (perfect, moderate, weak). The eight
estimators are SRS(EMP) (reference, shown as a dashed grey line at
$\operatorname{RE}=1$), SRS(LF), SRS(HD), RSS(EMP), RSS(LF), RSS(HD),
ORSS(LF), and ORSS(HD). Values $\operatorname{RE}>1$ indicate improvement
over the SRS baseline. All results are based on $B=100{,}000$ Monte Carlo
replicates.}
\end{center}
\end{figure}

Figure~\ref{pic1} presents results for $(m,k)=(5,3)$ (total sample size $n=15$).
Under perfect ranking ($\rho=1$) and the normal distribution, the proposed RSS
L-estimators dominate throughout: RSS(HD) achieves the highest relative efficiency
across the full range of quantile levels, reaching approximately $2.5$ at $p=0.5$,
with ORSS(HD) performing comparably. Both substantially exceed the efficiency of
RSS(EMP), SRS(LF), and SRS(HD). Under Exp(1) and $G(2)$, the picture is more
nuanced: RSS(LF) leads at lower quantile levels ($p\leq 0.3$), reflecting the
ability of the locally concentrated LF weights to track the mass in the lower tail
of right-skewed distributions, while RSS(HD) takes over for $p\geq 0.5$ and
achieves high  performance at upper quantiles. The ORSS estimators generally
fall between the RSS L-estimators and the empirical RSS quantile.  All RSS-based
L-estimators remain above the SRS baseline ($\operatorname{RE}>1$) for essentially
all quantile levels across all three distributions under perfect ranking. As ranking
quality deteriorates to $\rho=0.75$, the efficiency curves shift downward, but
RSS(HD) and RSS(LF) retain clear advantages over SRS(EMP) across most of the
quantile range. Under weak ranking ($\rho=0.50$), all RSS estimators remain at or
above the SRS baseline, with RSS(HD) still achieving relative efficiencies of
approximately $1.5$--$1.6$ at central quantiles under the normal distribution.
This robustness to imperfect ranking is consistent with the qualitative MSE
continuity established in Proposition~\ref{prop:imperfect-rate}.

\begin{figure}
\begin{center}
\includegraphics[scale=1.1]{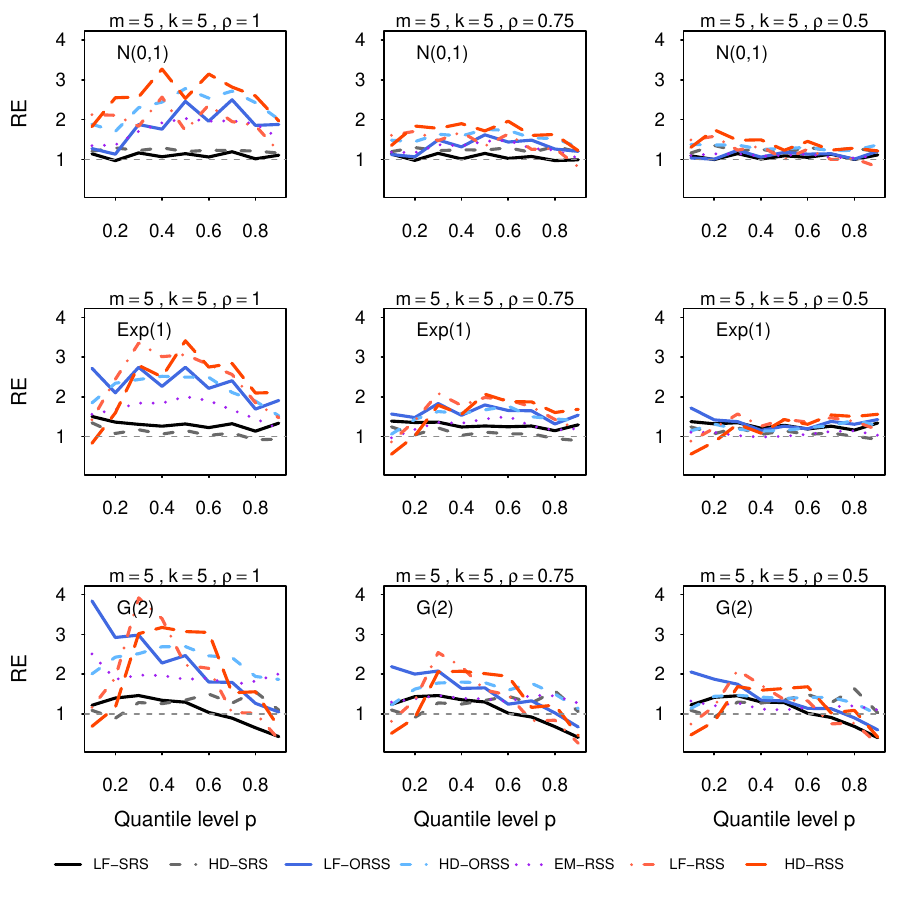}
\caption{\label{pic2}Relative efficiency (RE) of eight quantile estimators
under SRS and RSS for $(m,k)=(5,5)$ (total sample size $n=25$). Rows,
columns, and estimator labels are as in Figure~\ref{pic1}. ORSS estimators
are included because $n=25$ remains within the range where the
polynomial-multiplication algorithm of Remark~\ref{rem:poly} makes their
computation tractable.}
\end{center}
\end{figure}

Figure~\ref{pic2} shows results for $(m,k)=(5,5)$ (total sample size $n=25$).
Increasing the set size from $k=3$ to $k=5$ produces a substantial improvement for
all RSS estimators, reflecting the greater between-strata coverage of the population
distribution when more rank strata are used. Under perfect ranking and the normal
distribution, RSS(HD) achieves relative efficiencies approaching $4.0$ at central
quantiles, a roughly $25\%$ improvement over the $(m,k)=(5,3)$ configuration despite
the same number of cycles. ORSS(HD) is competitive, reaching similar peaks. For
Exp(1) and $G(2)$, the qualitative pattern observed at $k=3$ is preserved but
amplified: RSS(LF) dominates at lower quantiles while RSS(HD) is best at central
and upper quantiles. The crossover between RSS(LF) and RSS(HD) occurs near $p=0.4$
for Exp(1) and near $p=0.3$ for $G(2)$, where the Weibull mass is more
concentrated. Under moderate ranking ($\rho=0.75$), RSS(HD) still achieves relative
efficiencies exceeding $2.0$ at central quantiles under the normal distribution, and
both RSS L-estimators comfortably outperform RSS(EMP) and the SRS-based estimators.
Under weak ranking ($\rho=0.50$), RSS(HD) and RSS(LF) maintain relative efficiencies
above $1.5$ at central quantiles, confirming the practical robustness of the method
to the level of ranking quality typically encountered in applications.


\begin{figure}[h!]
\begin{center}
\includegraphics[scale=1.1]{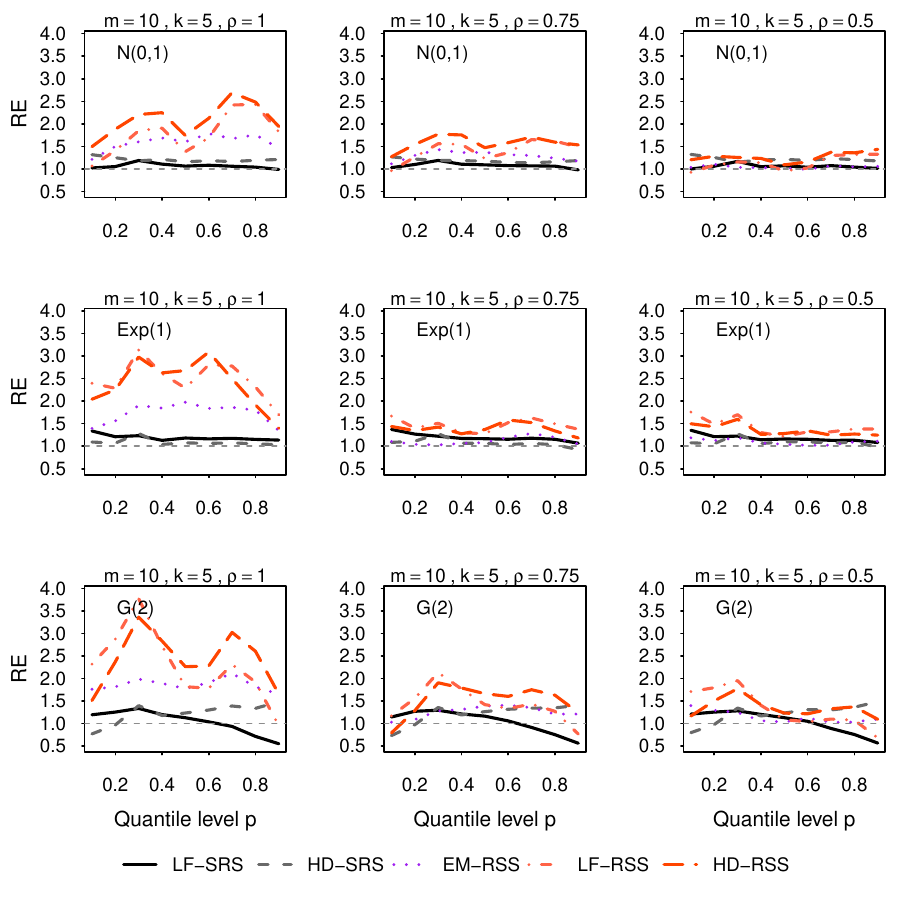}
\caption{\label{pic3}Relative efficiency (RE) of six quantile estimators
under SRS and RSS for $(m,k)=(10,5)$ (total sample size $n=50$). Rows and
columns are as in Figure~\ref{pic1}. ORSS estimators are excluded from this
configuration; see the discussion in Section~\ref{sec5}.}
\end{center}
\end{figure}

Figure~\ref{pic3} presents results for $(m,k)=(10,5)$ (total sample size $n=50$),
where the ORSS estimators are excluded for the reasons discussed above. The larger
sample size provides additional stabilization, and the RSS L-estimators demonstrate
their strongest performance across the study.  Even under weak ranking ($\rho=0.50$), the RSS L-estimators
outperform SRS(EMP) for all quantile levels and all distributions considered,
confirming that the efficiency gains of the proposed methodology are robust to
practical imperfections in the ranking mechanism.

Taken together, the simulation results establish three main conclusions. First,
RSS(HD) is the best overall estimator for moderate and upper quantiles under all
three distributions: it delivers the largest efficiency gains under symmetric
distributions and remains competitive under skewed ones. Second, RSS(LF) is
preferable at lower quantile levels of right-skewed distributions, where its more
locally concentrated weighting scheme is better adapted to the tail behavior of
Exp(1) and $G(2)$. Third, the efficiency advantages of the proposed estimators
scale with the set size $k$ and are remarkably robust to imperfect ranking: even
at $\rho=0.50$, a level of ranking quality that is modest by practical standards,
both RSS(LF) and RSS(HD) maintain clear improvements over the SRS baseline across
all configurations studied.


\section{NHANES Liver Transient Elastography}
\label{sec6}

A practically appealing biomedical application of the proposed methodology arises from liver transient elastography data in the National Health and Nutrition Examination Survey (NHANES). The August 2021--August 2023 NHANES release provides public documentation for the overall cycle, the examination component, the liver ultrasound transient elastography file ({LUX\_L}), the body-measures file ({BMX\_L}), and the demographics file ({DEMO\_L}) \citep{nhanes2021overview,nhanes2021exam,nhanes2021lux,nhanes2021bmx,nhanes2021demo}. In this cycle, liver transient elastography was performed in the mobile examination center, and the public file records two summary outcomes representing  the median liver stiffness measurement, denoted {LUXSMED} and measured in kilopascals (kPa), and the median controlled attenuation parameter, denoted {LUXCAPM} and measured in dB/m \citep{nhanes2021lux}. In this setting, measuring the response is comparatively costly and specialized, whereas a concomitant ranking variable can be obtained quickly and inexpensively from routine anthropometric assessment. From a clinical perspective, liver stiffness is widely used as a noninvasive surrogate of fibrosis severity, while CAP is used to assess hepatic steatosis \citep{karlas2017cap,eddowes2019accuracy}. Accordingly, quantiles in the middle and upper parts of these distributions are of direct interest for screening, risk stratification, and the identification of clinically elevated subpopulations.

In our illustration, the observed NHANES adult sample is treated as a finite population from which repeated RSS samples are drawn. We restrict attention to adults because clinically meaningful interpretation of liver stiffness and steatosis quantiles is more naturally framed in the context of adult metabolic dysfunction-associated liver disease, obesity, and fibrosis risk, and because anthropometric rankers such as waist circumference and body mass index have a more stable clinical interpretation in adults than in children or adolescents. After restricting the August 2021--August 2023 NHANES file to adults aged 20 years or older with observed elastography outcomes and available ranking variables, the resulting analytic finite population consisted of \(N= 5,614\) participants. Within each set of size $k$, individuals are ranked by a cheap concomitant variable and only the selected unit is ``measured'' for the costly response, thereby mimicking the practical situation in which anthropometric variables are easy to obtain but FibroScan measurements are not. We emphasize that NHANES complex survey weights are not used in this illustration, because the purpose is methodological comparison under a controlled RSS design rather than design-based population inference for the U.S. population. We considered two natural candidate rankers, waist circumference ({BMXWAIST}) and body mass index ({BMXBMI}), and evaluated the estimators on the finer grid $p\in\{0.20, 0.25, \ldots, 0.80\}$. The number of RSS replicates was set to $B=100{,}000$.

The exploratory analysis already supports this setup. For {LUXSMED}, the empirical distribution is strongly right-skewed, with median $5.0$, interquartile range $[4.1,6.3]$, and maximum $75.0$, whereas {LUXCAPM} has median $250$, interquartile range $[210,301]$, and range $100$ to $400$. The scatterplots also show a clear positive association between the elastography outcomes and anthropometric variables. For {LUXSMED}, the rank-correlation screen slightly favors BMI over waist circumference (Spearman correlations $0.330$ versus $0.351$, respectively, with BMI selected as the preferred ranking variable in the final illustration); for {LUXCAPM}, waist circumference is the preferred ranker.

\begin{table}[htbp]
\centering
\caption{Compact descriptive summary for the two NHANES elastography outcomes used in the real-data illustration.}
\label{tab:nhanes-desc}
\begin{tabular}{lrrrrrrrr}
\hline
Outcome & Mean & SD & Min & Q1 & Median & Q3 & Max & Preferred ranker \\
\hline
{LUXSMED} & 6.19 & 6.13 & 1.5 & 4.1 & 5.0 & 6.3 & 75.0 & {BMXBMI} \\
{LUXCAPM} & 251.93 & 64.09 & 100.0 & 210.0 & 250.0 & 301.0 & 400.0 & {BMXWAIST} \\
\hline
\end{tabular}
\end{table}

\begin{figure}[htbp]
\centering
\begin{minipage}{0.48\textwidth}
\centering
\includegraphics[page=1,width=\textwidth]{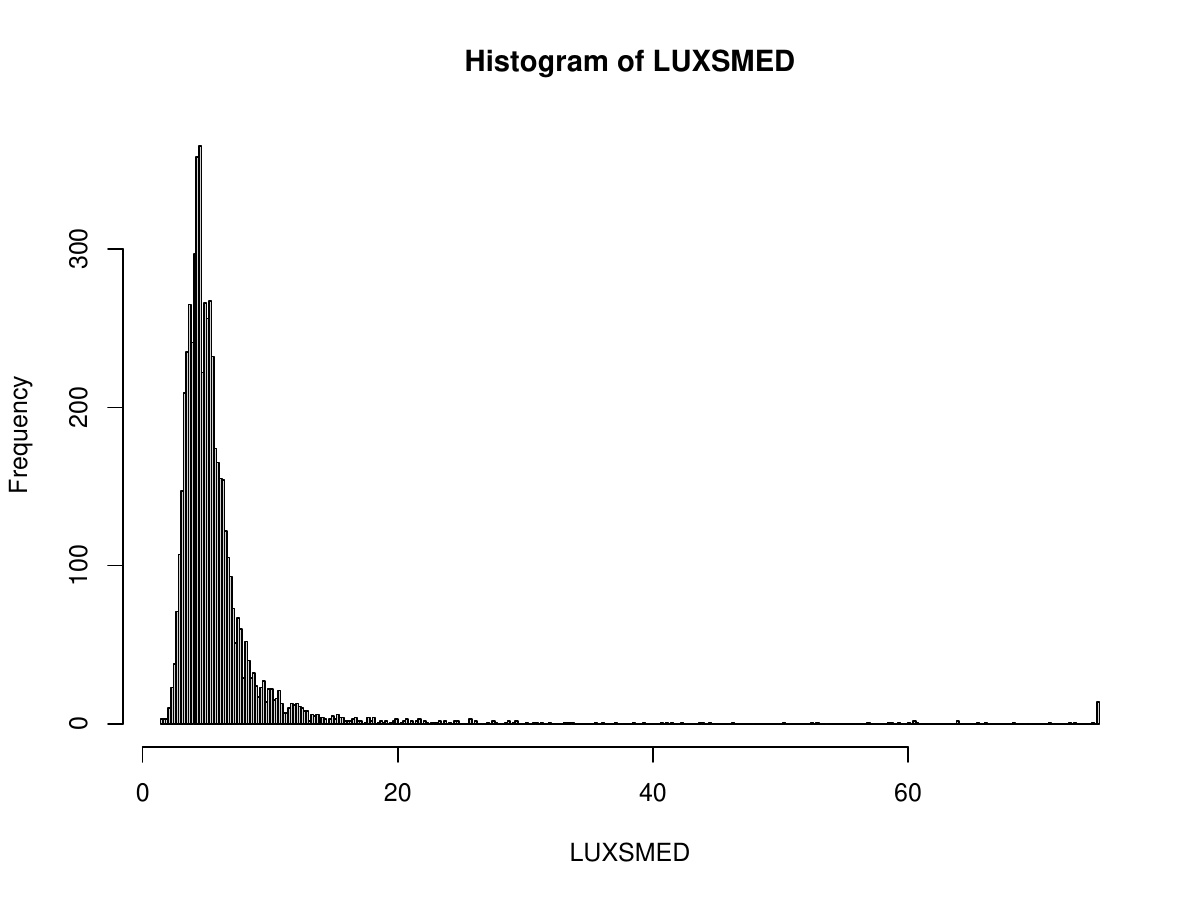}
\vspace{1mm}
\small (a) Histogram of {LUXSMED}
\end{minipage}\hfill
\begin{minipage}{0.48\textwidth}
\centering
\includegraphics[page=3,width=\textwidth]{nhanes_LUXSMED_m5k3_B100000_EDA_plots.pdf}
\vspace{1mm}
\small (b) BMI versus {LUXSMED}
\end{minipage}

\vspace{3mm}
\begin{minipage}{0.48\textwidth}
\centering
\includegraphics[page=1,width=\textwidth]{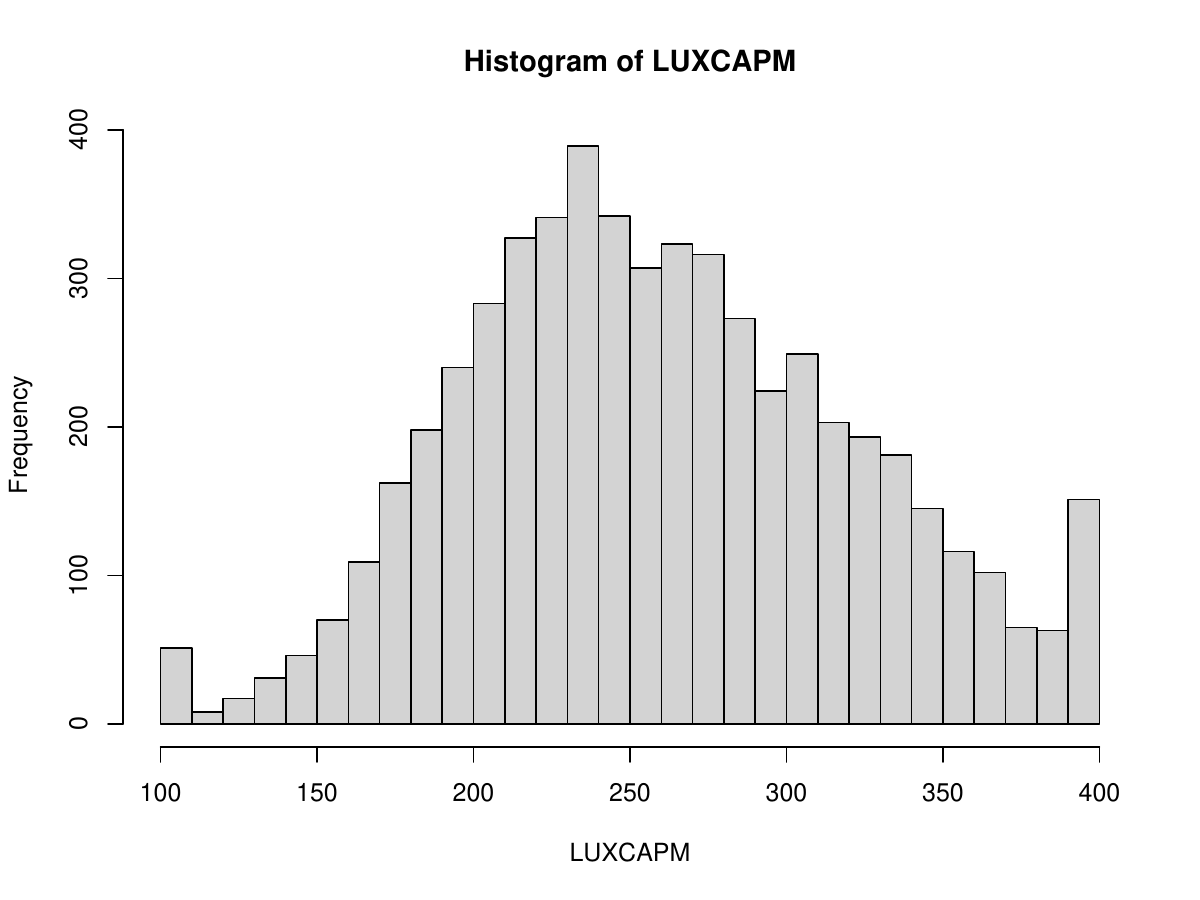}
\vspace{1mm}
\small (c) Histogram of {LUXCAPM}
\end{minipage}\hfill
\begin{minipage}{0.48\textwidth}
\centering
\includegraphics[page=3,width=\textwidth]{nhanes_LUXCAPM_m5k3_B100000_EDA_plots.pdf}
\vspace{1mm}
\small (d) Waist circumference versus {LUXCAPM}
\end{minipage}
\caption{Exploratory plots for the NHANES liver transient elastography illustration. The upper panels show the heavy right tail of {LUXSMED} and its positive association with BMI. The lower panels show the broader distribution of {LUXCAPM} and the corresponding positive association with waist circumference.}
\label{fig:nhanes-eda}
\end{figure}

Figure~\ref{fig:nhanes-re} summarizes the relative-efficiency comparisons for the real-data RSS study with $(m,k)=(5,3)$. The left panel reports the {LUXSMED} analysis using BMI as the ranking variable, while the right panel reports the {LUXCAPM} analysis using waist circumference. The general message is consistent with the simulation study: the RSS-based L-estimators frequently dominate the empirical SRS benchmark and often improve on the SRS L-estimators as well, though the two outcomes display markedly different patterns.

For {LUXSMED}, the picture is driven primarily by the heavy right skew of the distribution and the moderate correlation of BMI with liver stiffness. The LF-type estimators---{RSS(LF)} and {ORSS(LF)}---are the strongest performers across the middle of the quantile range, with {RSS(LF)} peaking at approximately $1.6$ near $p=0.55$ and {ORSS(LF)} reaching similar levels near $p=0.5$--$0.55$. {SRS(LF)} and {ORSS.HD} maintain moderate, broadly stable efficiency gains of roughly $1.2$--$1.35$ over most of the range. By contrast, {RSS(HD)} and {SRS(HD)} exhibit an extreme collapse in the upper tail: both methods deteriorate sharply above $p=0.65$ and fall well below one by $p=0.75$--$0.80$, with {RSS(HD)} reaching values near $0.2$--$0.25$ at these upper quantile levels. This pattern reflects the sensitivity of the HD weighting scheme to the long upper tail of the liver stiffness distribution when combined with a moderate-quality ranker, and reinforces the practical importance of ranker selection for heavily skewed outcomes. {RSS(EMP)} remains near or just above one throughout, providing a stable but unimpressive baseline. Thus, for liver stiffness quantile estimation with a moderate ranker, the LF-type estimators offer the most reliable efficiency gains, while HD-type methods should be used with caution in the upper tail.

\begin{figure}[h!]
\centering
\begin{minipage}{0.48\textwidth}
\centering
\includegraphics[width=\textwidth]{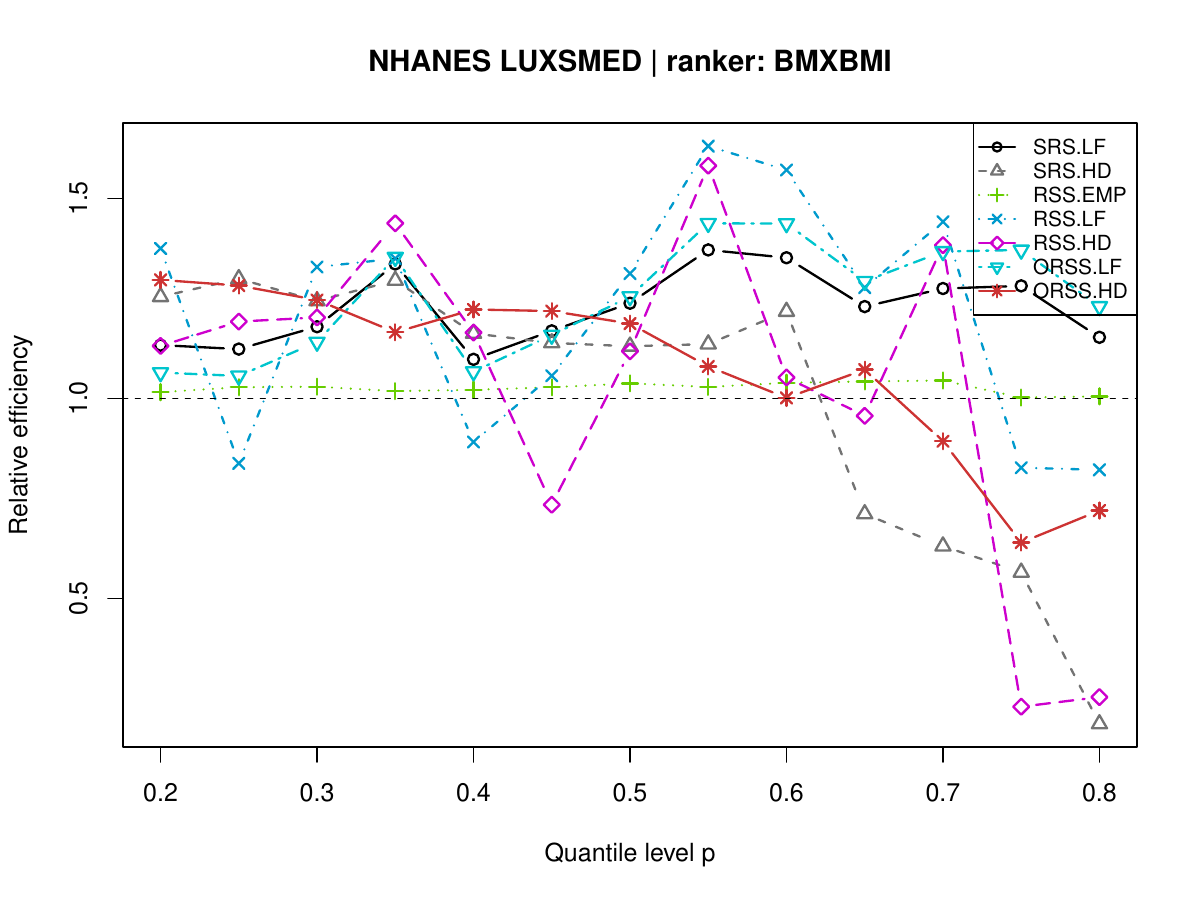}
\vspace{1mm}
\small (a) Relative efficiency for {LUXSMED} (ranker: {BMXBMI})
\end{minipage}\hfill
\begin{minipage}{0.48\textwidth}
\centering
\includegraphics[width=\textwidth]{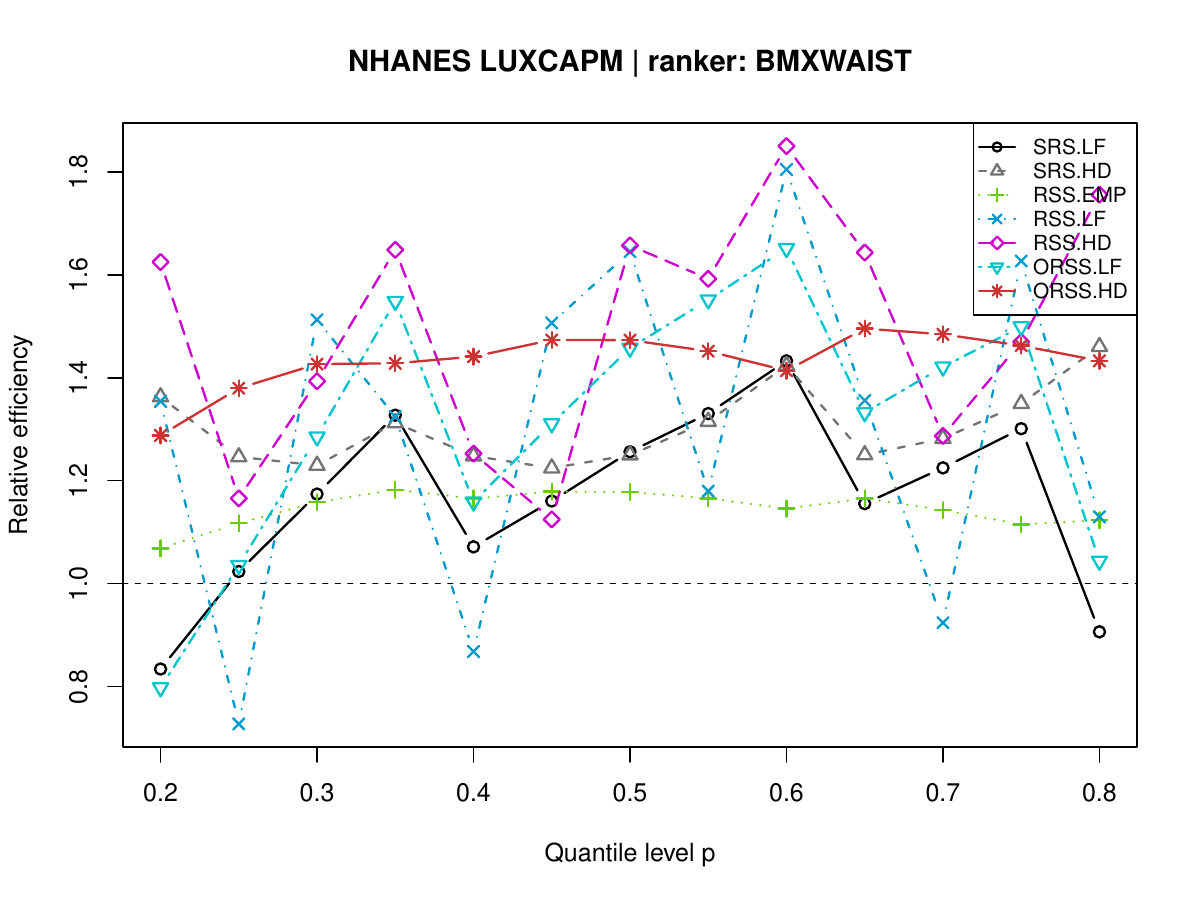}
\vspace{1mm}
\small (b) Relative efficiency for {LUXCAPM} (ranker: {BMXWAIST})
\end{minipage}
\caption{Relative-efficiency curves for the NHANES RSS illustration with $(m,k)=(5,3)$ and $B=100{,}000$ replicates, evaluated at $p\in\{0.20,0.25,\ldots,0.80\}$. Values greater than one indicate improvement over the empirical SRS quantile estimator. The left panel uses BMI ({BMXBMI}) as the ranking variable for {LUXSMED}; the right panel uses waist circumference ({BMXWAIST}) for {LUXCAPM}. For {LUXSMED}, {RSS(LF)} and {ORSS(LF)} are the dominant methods in the central quantile range, while {RSS(HD)} and {SRS(HD)} collapse in the upper tail. For {LUXCAPM}, {RSS(HD)} achieves the largest gains and {ORSS(HD)} provides the most stable efficiency profile across all reported quantile levels.}
\label{fig:nhanes-re}
\end{figure}

For {LUXCAPM}, the evidence in favor of the RSS-based methods is strong and covers a wider range of quantile levels. {RSS(HD)} is the most prominent performer, displaying large peaks at several quantile levels and reaching approximately $1.85$ near $p=0.6$ and $1.8$ near $p=0.65$; it remains above one across virtually the entire quantile range despite some oscillation at lower levels. {ORSS(HD)} is the most stable estimator in this analysis, maintaining relative efficiencies consistently in the range $1.3$--$1.5$ across all reported quantile levels, confirming the reliability of the ORSS construction when the polynomial-algorithm weights are well-calibrated to the design. {RSS(LF)} also achieves substantial gains at many quantile levels but is more variable, with dips near $p=0.25$ and $p=0.4$ where it falls briefly toward or below one. The SRS-based methods, {SRS(LF)} and {SRS(HD)}, show moderate gains through most of the range but deteriorate at the upper tail, with {SRS(LF)} dropping below one near $p=0.8$. {RSS(EMP)} provides modest but consistent gains above one throughout. Overall, for the controlled attenuation parameter both the HD and LF variants of the proposed RSS estimators deliver clear practical advantages, with {ORSS(HD)} offering the most stable efficiency profile across the full quantile grid.

\section{Discussion}
\label{sec7}

We have developed a framework for L-estimation of population quantiles under ranked set sampling. The core methodological contribution is the stratum-based family of RSS L-estimators (RSS(LF) and RSS(HD)), derived from a quantile decomposition identity that reduces the RSS quantile problem to $k$ independent SRS-type subproblems, one per rank stratum. The formal theory in the paper is strongest at the stratum level, where consistency and asymptotic normality can be established under standard regularity conditions. For the final combined estimator, these results provide a principled first-order motivation, while a complete asymptotic treatment of the ordering step is left for future work. Empirically, the Harrell--Davis variant tends to perform best for moderate and upper quantiles, whereas RSS(LF) can be competitive for lower quantiles under right-skewed distributions. Both estimators are computationally scalable and easy to implement.

Several directions for future research arise from this work. The stratum-level asymptotic variance expressions derived in Section~\ref{sec4} suggest natural criteria for choosing $(m,k)$ or for designing unbalanced RSS schemes targeted at a quantile level $p$, extending the optimal design perspective of \cite{chen2001optimal} to the L-estimation framework. Developing consistent plug-in or bootstrap variance estimators for the final combined estimator would enable confidence interval construction and would complement the results of \cite{nourmohammadi2014confidence}. The methodology can be extended to other RSS-type designs including median-ranked set sampling, and partially rank-ordered set sampling \citep{hatefi2017improved}, all of which produce independent non-identically distributed observations for which the stratum decomposition idea may apply. Finally, since L-estimators of quantiles are closely related to distortion risk measures and inequality indices such as the Gini coefficient, the RSS-based ideas developed here may also improve the precision of these quantities in surveys that employ RSS-type stratification.

\section*{Acknowledgment}
Mohammad Jafari Jozani gratefully acknowledges partial support from NSERC Canada. The authors are also grateful to Dr.\ Saeid Amiri for his valuable comments, helpful suggestions, and insightful discussions, which greatly improved this work.

\bibliographystyle{apalike}
\bibliography{ref_paper}

\end{document}